\definecolor{lightgreen}{rgb}{.9,1,.9}
\newcolumntype{L}[1]{>{\raggedright\arraybackslash}p{#1}}
\newcolumntype{C}[1]{>{\centering\arraybackslash}p{#1}}
\newcolumntype{R}[1]{>{\raggedleft\arraybackslash}p{#1}}
\theoremstyle{plain} 
\newtheorem{theorem}{Theorem}
\newtheorem{lemma}{Lemma}
\newtheorem{assumption}{Assumption}
\def\defn{\,\coloneqq\,}
\def\Fix{{\mathsf{Fix}}}
\def\R{\mathbb{R}}
\def\E{\mathbb{E}}
\def\zerobm{{\bm{0}}}
\def\ebm{{\bm{e}}}
\def\xbm{{\bm{x}}}
\def\zbm{{\bm{z}}}
\def\ybm{{\bm{y}}}
\def\zbm{{\bm{z}}}
\def\sbm{{\bm{s}}}
\def\abm{{\bm{a}}}
\def\bbm{{\bm{b}}}
\def\xbfhat{{\widehat{\mathbf{x}}}}
\def\xbfhat{{\widehat{\bm{x}}}}
\def\Tsfhat{{\widehat{{\mathsf{T}}}}}
\def\Fsfhat{{\widehat{\mathsf{F}}}}
\def\ghat{{\widehat{{g}}}}
\def\Abm{{\bm{A}}}
\def\Dbm{{\bm{D}}}
\def\Pbm{{\bm{P}}}
\def\Sbm{{\bm{S}}}
\def\Fbm{{\bm{F}}}
\def\Ibm{{\bm{I}}}
\def\phibm{{\bm{\phi}}}
\def\thetabm{{\bm{\theta }}}
\def\ellhat{{\hat{\ell}}}
\def\Dsf{{\mathsf{D}}}
\def\Gsf{{\mathsf{G}}}
\def\Isf{{\mathsf{I}}}
\def\Rsf{{\mathsf{R}}}
\def\Tsf{{\mathsf{T}}}
\def\Tsf{{\mathsf{T}}}
\def\Dsf{{\mathsf{D}}}
\def\Fsf{{\mathsf{F}}}
\def\Hsf{{\mathsf{H}}}
\def\Gsf{{\mathsf{G}}}
\def\Isf{{\mathsf{I}}}
\def\Hsf{{\mathsf{H}}}
\def\bbmbar{{\overline{\bm{b}}}}
\def\xbmast{{\bm{x}^\ast}}
\def\xbmbar{{\overline{\bm{x}}}}
\def\xbmhat{{\widehat{\bm{x}}}}
\def\argmin{\mathop{\mathsf{arg\,min}}} 
\newcommand*\samethanks[1][\value{footnote}]{\footnotemark[#1]}
\title{Online Deep Equilibrium Learning for\\Regularization by Denoising}
\author{
Jiaming~Liu$^{\footnotesize 1,}$\thanks{These authors contributed equally.}{\;\,}, Xiaojian~Xu$^{\footnotesize 2,}$\samethanks{\;\,}, Weijie~Gan$^{\footnotesize 2}$, \\ 
Shirin~Shoushtari$^{\footnotesize 1}$,~and~Ulugbek~S.~Kamilov$^{\footnotesize 1, 2}$\\
\emph{\footnotesize $^{\footnotesize 1}$Department of Electrical and Systems Engineering,~Washington University in St.~Louis, MO 63130, USA}\\
\emph{\footnotesize $^{\footnotesize 2}$Department of Computer Science and Engineering,~Washington University in St.~Louis, MO 63130, USA}}
\begin{document}
\date{}
\maketitle

\begin{abstract}
Plug-and-Play Priors (PnP) and Regularization by Denoising (RED) are widely-used frameworks for solving imaging inverse problems by computing fixed-points of operators combining physical measurement models and learned image priors. While traditional PnP/RED formulations have focused on priors specified using image denoisers, there is a growing interest in learning PnP/RED priors that are end-to-end optimal. The recent Deep Equilibrium Models (DEQ) framework has enabled memory-efficient end-to-end learning of PnP/RED priors by implicitly differentiating through the fixed-point equations without storing intermediate activation values.  However, the dependence of the computational/memory complexity of the measurement models in PnP/RED on the total number of measurements leaves DEQ impractical for many imaging applications. We propose ODER as a new strategy for improving the efficiency of DEQ through stochastic approximations of the measurement models. We theoretically analyze ODER giving insights into its convergence and ability to approximate the traditional DEQ approach. Our numerical results suggest the potential improvements in training/testing complexity due to ODER on three distinct imaging applications.
\end{abstract}

\section{Introduction}

\vspace{-0.5em}
There has been considerable recent interest in using \emph{deep learning (DL)} in the context of imaging inverse problems (see recent reviews~\cite{McCann.etal2017, Lucas.etal2018, Ongie.etal2020}). Instead of explicitly defining a regularizer, the traditional DL approach is based on training a \emph{convolutional neural network (CNN)} architecture, such as U-Net~\cite{Ronneberger.etal2015}, to invert the measurement operator by exploiting the natural redundancies in the imaging data~\cite{DJin.etal2017, Kang.etal2017, Chen.etal2017, Sun.etal2018, han.etal2018}. \emph{Plug-and-Play Priors (PnP)}~\cite{Venkatakrishnan.etal2013} and \emph{Regularization by Denoising (RED)}~\cite{Romano.etal2017} are two well-known alternative approaches to the traditional DL that enable the integration of pre-trained CNN denoisers, such as DnCNN~\cite{Zhang.etal2017} or DRUNet~\cite{Zhang.etal2021b}, as image priors within iterative algorithms. When equipped with advanced CNN denoisers, PnP/RED provide excellent performance by exploiting both the implicit prior, characterized by a denoiser, and the measurement model~\cite{Chan.etal2016, Sreehari.etal2016, Kamilov.etal2017, Buzzard.etal2017, Reehorst.Schniter2019, Ryu.etal2019, Mataev.etal2019, Wu.etal2020, Liu.etal2020}. \emph{Deep Unfolding (DU)} is a related approach that interprets the iterations of an image recovery algorithm as layers of a neural network and trains it end-to-end in a supervised fashion. Unlike in PnP/RED, the CNN in DU is trained jointly with the measurement model, leading to an image prior optimized for a given inverse problem~\cite{zhang2018ista, Hauptmann.etal2018, Adler.etal2018, Aggarwal.etal2019, Hosseini.etal2019, Yaman.etal2020, Mukherjee.etal2021}. DU architectures, however, are usually limited to a small number of unfolded iterations due to the high computational and memory complexity of training.

Recent work on~\emph{Neural ODEs}~\cite{chen.etal2018, Dupont.etal2019, Kelly.etal2020} and~\emph{Deep Equilibrium Models (DEQ)}~\cite{Bai.etal2019, Winston.etal2019, Gilton.etal2021, Kawaguchi.etal2021, Fung.etal2021, Gurumurthy.etal2021} has shown the potential benefits of \emph{implicit neural networks} in a number of DL tasks. For example, DEQ was recently used to train CNN priors within PnP/RED iterations by differentiating through the fixed points of the corresponding iterations~\cite{Gilton.etal2021}. Training PnP/RED using DEQ is equivalent to training an infinite depth feedforward network integrating a physical measurement model and CNN prior. However, the training of such networks can still be a significant computational and memory challenge in applications that require processing of a large number of sensor measurements.  Specifically, the data-consistency layers in~\cite{Gilton.etal2021} are based on \emph{batch} processing, which means that the \emph{entire set of measurements} is processed at each layer. While this type of batch data processing is known to be suboptimal in traditional large-scale optimization~\cite{Bottou.Bousquet2007, Bertsekas2011, Kim.etal2013, Bottou.etal2018}, the issue has never been considered in the context of training of implicit networks such as those specified via PnP/RED iterations. 

This paper addresses this issue by proposing \emph{Online Deep Equilibrium RED (ODER)} as the first DEQ framework for inverse problems that adopts \emph{stochastic processing} of measurements within an implicit neural network. We argue that the proposed \emph{online} approach can improve training and testing efficiency compared to its \emph{batch} counterpart in a number of applications where the number of measurements is large. ODER can be implemented using the fixed-point iterations of RED by introducing stochastic approximations to the corresponding forward and backward DEQ passes. The CNN prior within ODER is trained end-to-end to remove artifacts due to the imaging system and stochastic processing. We present theoretical insights into the convergence of forward and backward passes of ODER, and show its ability to approximate learning using the traditional batch DEQ approach. We show the practical relevance of ODER by solving inverse problems in \emph{intensity diffraction tomography (IDT)}~\cite{Ling.etal18, Wu.etal2020}, sparse-view~\emph{computed tomography (CT)}~\cite{Kak.Slaney1988} and~accelerated parallel \emph{magnetic resonance imaging (MRI)}~\cite{Griswold2002, Uecker.etal2014}. Our numerical results show the ability of ODER to match the imaging quality of the batch DEQ learning at a fraction of complexity. Our work thus addresses an important gap in the current literature on PnP/RED, DU, and DEQ by providing an efficient framework applicable to a wide variety of imaging inverse problems.

All proofs and some technical details that have been omitted for space appear in the appendix, which also provides more background and simulations.
\vspace{-.5em}
\section{Background}

\textbf{Inverse problems.} Many imaging problems---such as IDT, CT, and MRI---can be formulated as an inverse problem involving the recovery of an image $\xbmast \in \R^n$ from noisy measurements $\ybm = \Abm\xbmast + \ebm$,
where $\Abm \in \R^{m \times n}$ is the measurement operator and $\ebm \in \R^m$ is the noise. A common approach to estimate $\xbmast$ is to solve an optimization problem
\begin{equation}
\label{Eq:OptimizationForInverseProblem}
\xbmhat = \argmin_{\xbm \in \R^n} \left\{g(\xbm) + h(\xbm)\right\},
\end{equation}
where $g$ is a data-fidelity term that quantifies consistency with the observed data $\ybm$ and $h$ is a regularizer that encodes prior knowledge on $\xbm$. A widely-used data-fidelity term and regularizer in inverse problems are $g(\xbm) = \frac{1}{2}\|\ybm-\Abm\xbm\|_2^2$ and the \emph{total variation (TV)} function $h(\xbm) = \tau \|\Dbm\xbm\|_1$, where $\Dbm$ is the gradient operator and $\tau > 0$ is the regularization parameter~\cite{Rudin.etal1992, Bioucas-Dias.Figueiredo2007, Beck.Teboulle2009a}. 

\vspace{+0.3em}
\noindent
\textbf{PnP, RED, and DU.}  PnP~\cite{Venkatakrishnan.etal2013, Sreehari.etal2016} and RED~\cite{Romano.etal2017} are two related classes of iterative algorithms that use \emph{additive white Gaussian noise (AWGN)} denoisers, such as BM3D~\cite{Dabov.etal2007} or DnCNN~\cite{Zhang.etal2017}, as priors for inverse problems (see the recent review~\cite{Kamilov.etal2022}). Since for general denoisers PnP/RED do not solve an optimization problem~\cite{Reehorst.Schniter2019}, it is common to interpret PnP/RED as fixed-point iterations of some high-dimensional operators. For example, given a denoiser $\Dsf_\thetabm: \R^n \rightarrow \R^n$ parameterized by a CNN with weights $\thetabm$, the \emph{steepest descent} variant of RED (SD-RED)~\cite{Romano.etal2017} can be written as
\begin{equation}
\label{Eq:REDItetation}
\xbm^k = \Tsf_\thetabm(\xbm^{k-1}) = \xbm^{k-1} - \gamma \Gsf_\thetabm(\xbm^{k-1}) \quad\text{with}\quad \Gsf_\thetabm(\xbm) \defn \nabla g(\xbm) + \tau(\xbm - \Dsf_\thetabm(\xbm))\;,
\end{equation}
where $g$ is the data-fidelity term, and  $\gamma, \tau > 0$ are the step size and the regularization parameters, respectively. SD-RED thus seeks to compute a fixed-point $\xbmbar \in \R^n$ of the operator $\Tsf$
\begin{equation}
\label{Eq:ZeroRED}
\xbmbar\in\Fix(\Tsf_\thetabm)\defn\{\xbm\in\R^n:\Tsf_\thetabm(\xbm)=\xbm\}\quad\Leftrightarrow\quad\Gsf_\thetabm(\xbmbar) = \nabla g(\xbmbar) + \tau(\xbmbar-\Dsf_\thetabm(\xbmbar))=\zerobm\;,
\end{equation}
The solutions of~\eqref{Eq:ZeroRED} balance the requirements to be both data-consistent (via $\nabla g$) and noise-free (via $(\Isf-\Dsf_\thetabm)$), which can be intuitively interpreted as finding an equilibrium between the physical measurement model and learned prior model. Remarkably, this heuristic of using denoisers not necessarily associated with any $h$ within an iterative algorithm exhibited great empirical success~\cite{Zhang.etal2017a, Metzler.etal2018, Dong.etal2019, Zhang.etal2019, Mataev.etal2019, Sun.etal2019b, Liu.etal2020, Ahmad.etal2020, Wei.etal2020, Xie.etal2021} and spurred a great deal of theoretical work on PnP/RED~\cite{Chan.etal2016, Meinhardt.etal2017, Buzzard.etal2017, Reehorst.Schniter2019, Ryu.etal2019, Sun.etal2018a, Tirer.Giryes2019, Teodoro.etal2019, Xu.etal2020, Sun.etal2021, Cohen.etal2020}. It is worth mentioning that there has been considerable effort in reducing the \emph{test-time} computational/memory complexity of PnP/RED by designing online and stochastic PnP/RED algorithms~\cite{Sun.etal2018a, Wu.etal2020, Tang.Davies2020, Sun.etal2021}. 

DU (also known as \emph{algorithm unrolling}) is a DL paradigm that has gained popularity due to its ability to systematically connect iterative algorithms and deep neural network architectures (see reviews in~\cite{Ongie.etal2020, Monga.etal2021}). Many PnP/RED algorithms have been turned into DU architectures by parameterizing the operator $\Dsf_{\thetabm}$ as a CNN with weights $\thetabm$, truncating the PnP/RED algorithm to a fixed number of iterations, and training the corresponding architecture end-to-end in a supervised fashion. Recent work has explored strategies for reducing the memory and computational complexity of training DU architectures~\cite{Kellman.etal2020, Liu.etal2021}, However, a key bottleneck in DU training is the necessity to store the intermediate activation values required for computing the backpropagation updates, which fundamentally limits the number of unfolding layers one can practically use in large-scale applications.

\vspace{+0.3em}
\noindent
\textbf{DEQ.}  DEQ~\cite{Bai.etal2019} is a recent method for training infinite-depth, weight-tied feedforward networks by analytically backpropagating through the fixed points using implicit differentiation. The DEQ output is specified implicitly as a fixed point of an operator $\Tsf_{\thetabm}$ parameterized by weights $\thetabm$ 
\begin{equation}
\label{Eq:DEQFixedPoint}
\xbmbar = \Tsf_{\thetabm}(\xbmbar)\; .
\end{equation}
The DEQ forward pass estimates $\xbmbar$ in~\eqref{Eq:DEQFixedPoint} by either running a fixed-point iteration or using an optimization algorithm. The DEQ backward pass produces gradients with respect to $\thetabm$ by implicitly differentiating through the fixed points without the knowledge of how they are estimated
\begin{align}
\label{Eq:DEQ1}
\ell(\thetabm)= \frac{1}{2}\|\xbmbar(\thetabm)-\xbmast\|_2^2\quad\Rightarrow\quad \nabla\ell(\thetabm) = (\nabla_\thetabm \Tsf_\thetabm(\xbmbar))^\Tsf \left(\Isf - \nabla_\xbm \Tsf(\xbmbar)\right)^{-\Tsf}(\xbmbar-\xbmast),
\end{align}
where $\ell$ is the loss function, $\xbmast$ is the training label, and $\Isf$ is the identity mapping.  The vector product with the inverse-Jacobian in~\eqref{Eq:DEQ1} can be approximated by solving the following fixed-point equation
\begin{align}
\label{Eq:DEQ2}
\bbmbar\defn\left(\Isf - \nabla_\xbm \Tsf_\thetabm(\xbmbar)\right)^{-\Tsf}(\xbmbar-\xbmast)\quad\Rightarrow\quad\bbmbar= \left(\nabla_\xbm \Tsf_\thetabm(\xbmbar)\right)^{\Tsf}\bbmbar + (\xbmbar-\xbmast)\,.
\end{align}
Recent work has also explored Jacobian-free DEQ by replacing the inverse-Jacobian with an identity mapping $\Isf$, leading to a faster training~\cite{Fung.etal2021}. 

The comparison of equations~\eqref{Eq:REDItetation},~\eqref{Eq:ZeroRED}, and~\eqref{Eq:DEQFixedPoint} highlights an elegant connection between PnP/RED and DEQ. This connection was explored in the recent work~\cite{Gilton.etal2021} by using DEQ for learning the weights of the CNN prior $\Dsf_\thetabm$ end-to-end within PnP/RED iterations. Within the framework of~\cite{Gilton.etal2021}, PnP/RED is used for the forward pass and a backward pass is obtained by using~\eqref{Eq:DEQ2} on the PnP/RED operators. Specifically, the CNN prior in SD-RED can be trained by running the backward pass using $\Tsf_\thetabm$ in~\eqref{Eq:REDItetation} 
\begin{align}
\label{Eq:BPUpdate}
\bbm^{k} = \Fsf(\bbm^{k-1}) = \left(\nabla_\xbm \Tsf_\thetabm(\xbmbar)\right)^\Tsf\bbm^{k-1} + (\xbmbar-\xbmast).
\end{align}

This work makes several new contributions to the existing literature on PnP/RED, DU, and DEQ. The focus is to explore the impact of approximating $\Tsf_\thetabm$ in~\eqref{Eq:DEQFixedPoint} with a ``simpler'' operator $\widehat{\Tsf}_\thetabm$. Following~\cite{Gilton.etal2021}, we focus on inverse problems by using PnP/RED operators of form~\eqref{Eq:REDItetation} that integrate the physical measurement models and learned CNN priors. We give algorithmic, theoretical, and numerical evidence that one can achieve significant memory/computational gains, while preserving the performance of the original DEQ approach in~\cite{Gilton.etal2021}. It is worth noting that the results here have the potential to generalize to many other implicit neural networks beyond those specified via PnP/RED.

\section{Online Deep Equilibrium Method}
\label{Sec:OnlineDRED}
\vspace{-0.5em}
We consider inverse problems where the data-fidelity term $g$ can be expressed as
\begin{equation}
\label{Eq:StochData}
g(\xbm) = \frac{1}{b}\sum_{i=1}^{b}g_i(\xbm),
\end{equation}
where each $g_i$ depends only on the subset $\ybm_i \in \R^{m_i}$ of the full measurements $\ybm \in \R^m$ as
\begin{equation*}
\R^m = \R^{m_1} \times \R^{m_2} \times \cdots \times \R^{m_b} \quad\text{with}\quad m = m_1 + m_2 + \cdots + m_b\;.
\end{equation*}
We are primarily interested in scenarios where the memory/computational complexity of the gradient $\nabla g$ is proportional to $b$. Thus, when $b\rightarrow \infty$, the memory and computational complexity of traditional DEQ to train the CNN prior within the batch PnP/RED algorithms becomes impractical. 

To decouple the computational/memory complexity of DEQ from $b$, we adopt \emph{online} processing of measurements, where $g$ is approximated using a minibatch of $w \ll b$ measurements
\begin{equation}
\label{Eq:SGDPhysParam}
\ghat(\xbm) = \frac{1}{w} \sum_{s = 1}^w g_{i_s}(\xbm)
\quad\Rightarrow\quad 
\nabla\ghat(\xbm) = \frac{1}{w} \sum_{s = 1}^w \nabla g_{i_s}(\xbm)
\quad\Rightarrow\quad 
\Hsf\ghat(\xbm) = \frac{1}{w} \sum_{s = 1}^w \Hsf g_{i_s}(\xbm)\;,
\end{equation}
where $\{i_1, \dots, i_w\}$ are i.i.d random variables selected uniformly from the set $\{1,\dots,b\}$. Note that~\eqref{Eq:SGDPhysParam} directly implies the \emph{unbiasedness} of the online gradient $\E\left[\nabla \ghat(\xbm)\right] = \nabla g(\xbm)$ and Hessian $\E\left[\Hsf\ghat(\xbm)\right] = \Hsf g(\xbm)$ with the expectations taken over the random indices $\{i_1, \dots, i_w\}$. 

\subsection{Forward Pass}

The forward-pass of ODER is performed as follows
\begin{equation}
\label{Eq:SGDFWUpdate}
\xbm^k = \Tsfhat_\thetabm(\xbm^{k-1}) = \xbm^{k-1} - \gamma (\nabla \ghat(\xbm^{k-1}) + \tau \Rsf_\thetabm(\xbm^{k-1})), \quad k =1, 2, \dots, K,
\end{equation}
where $\Rsf_{\thetabm}=\Isf-\Dsf_{\thetabm}$ is the residual of the CNN prior $\Dsf_{\thetabm}$. The residual $\Rsf_{\thetabm}$ takes artifact-corrupted images at the input and produces the corresponding artifacts at the output. Note how the ODER forward-pass is independent of $b$ since it uses a minibatch approximation $\nabla\ghat$ in~\eqref{Eq:SGDPhysParam}.

 It is worth mentioning that when considered separately from DEQ, the forward pass of ODER corresponds to the existing online RED algorithm~\cite{Wu.etal2019, Wu.etal2020}. The contribution of this work is thus not the forward pass, but the integration of the online forward pass into the DEQ framework via the online backward pass, resulting in a more scalable and flexible DEQ framework for inverse problems.

\subsection{Backward Pass}

The backward pass of ODER for the MSE loss in~\eqref{Eq:DEQ1} is performed as follows
\vspace{-.5em}
\begin{align}
\label{Eq:SGDBPUpdate}
\bbm^k = \Fsfhat(\bbm^{k-1}) = \left[\nabla_\xbm \Tsfhat_\thetabm\left(\xbm^K\right)\right]^\Tsf\bbm^{k-1} + (\xbm^K-\xbmast), \quad k = 1, 2, \dots, K,
\end{align}
starting from $\bbm^0 = \zerobm$, where $\xbm^K$ is the final iterate of the forward pass~\eqref{Eq:SGDFWUpdate} at iteration $K \geq 1$. In both traditional and online backward passes, conventional auto-differentiation tools enable the computation of the Jacobian-vector products in~\eqref{Eq:BPUpdate} and~\eqref{Eq:SGDBPUpdate}. However, the key difference is that the computational complexity of ODER does not depend on the total number of measurements $b$.

\subsection{ODER Learning}

ODER seeks to minimize the MSE loss over $p \geq 1$ training samples
\begin{align}
\label{Eq:Loss}
\ell(\thetabm) = \frac{1}{p}\sum_{j = 1}^p\ell_j(\thetabm)\quad\text{with}\quad\ell_j(\thetabm)=\frac{1}{2}\|\xbmbar_j(\thetabm)-\xbm_j^{\ast}\|_2^2 \, ,
\end{align}
using approximate gradients $\nabla \hat{\ell}_j(\thetabm)$ computed via the online forward and backward passes that are independent of $b$. Here, $\thetabm$ denotes the weights of the CNN prior, $\xbm_j^{\ast}$ is the $j$th training label, and $\xbmbar_j(\thetabm)$ is the fixed-point of the full-batch SD-RED algorithm~\eqref{Eq:ZeroRED}. The backward pass of ODER can be integrated within any gradient-based optimizer, such as the~\emph{stochastic gradient descent (SGD)}. At training iteration $t \geq 1$ we generate two sets of independent random variables. First, the index $j_t$ is selected uniformly at random from $\{1,\dots, p\}$, then online forward and backward passes are computed using the measurement models in~\eqref{Eq:SGDPhysParam}. We can thus express the SGD update rule as follows
\begin{align}
\label{Eq:TrainUpdate}
\thetabm^{t+1} = \thetabm^t - \beta\nabla\hat{\ell}_{j_t}(\thetabm^t)\quad\text{with}\quad\nabla\hat{\ell}_{j_t}(\thetabm^t)= \left[\nabla_\thetabm \Tsfhat_{\thetabm^t}({\xbm_{j_t}^K})\right]^{\Tsf}\bbm_{j_t}^K, 
\end{align}
where $\beta>0$ is the SGD learning rate, $\xbm_{j_t}^K$ and $\bbm_{j_t}^K$ are the final iterates of the ODER forward and backward passes at the training index $j_t$ after $K \geq 1$ iterations.

\section{Theoretical Analys}
\label{Sec:TheoreticalAnalys}
Our theoretical analysis relies on a set of explicit assumptions serving as sufficient conditions. The proofs of all the theorems will be provided in the supplement. 
\begin{assumption}
\label{As:LipConObj}
Each $g_i$ is twice continuously differentiable and convex. There exists $\lambda > 0$ such that each gradient $\nabla g_i$ and Hessian $\Hsf g_i$ are $\lambda$-Lipschitz continuous.
\end{assumption}
The fact that $g$ is twice continuously differentiable is needed for the backward pass. The assumption that all the Lipschitz constants are the same is only needed to streamline mathematical exposition.

\begin{assumption}
\label{As:LipConDen}
$\Dsf_\thetabm(\xbm)$ is continuously differentiable with respect to $\thetabm$ and $\xbm$. There exists $\alpha > 0$ such that $\Dsf_\thetabm(\xbm)$, $\nabla_\xbm \Dsf_\thetabm(\xbm)$, and $\nabla_\thetabm \Dsf_\thetabm(\xbm)$ are $\alpha$-Lipschitz continuous with respect to $\thetabm$ and $\xbm$. Finally, we also assume that $\Dsf_\thetabm$ is a contraction, which means that there exists $\kappa < 1$ such that
\begin{equation*}
\|\Dsf_\thetabm(\zbm)-\Dsf_\thetabm(\ybm)\|_2  \leq \kappa \|\zbm-\ybm\|_2, \quad\forall \zbm, \ybm \in \R^n.
\end{equation*}
\end{assumption}
Since $\Dsf_\thetabm$ is a CNN, its differentiability is a standard assumption. The contractive $\Dsf_\thetabm$ and convex $g$, ensure that $\Tsf_\thetabm$ is a contraction, enabling provable convergence of the forward and backward passes. The design of contractive $\Tsf_\thetabm$ is a common PnP/RED strategy to ensure convergence~\cite{Ryu.etal2019, Gilton.etal2021, Sun.etal2021}.
\begin{assumption}
\label{As:BoundedIterates}
There exists $R > 0$ such that for all $\xbmbar \in \Fix(\Tsf)$ and $\bbmbar \in \Fix(\Fsf)$, we have $\|\xbm^k - \xbmbar\|_2 \leq R$ and $\|\bbm^k - \bbmbar\|_2 \leq R$ for all $k \in \{1, \dots, K\}$.
\end{assumption}
The existence of the bound $R$ is reasonable, as many images have bounded pixel values. Similarly, the bound on $\bbm^k$ is also reasonable for ensuring bounded DEQ gradients.
\begin{assumption}
\label{As:BoundedVariance}
There exists $\nu > 0$ such that for all $\xbm \in \R^n$, we have
\begin{equation*}
\E\left[\|\nabla g(\xbm) - \nabla \ghat(\xbm)\|_2^2\right] \leq \frac{\nu^2}{w}\quad\text{and}\quad \E\left[\|\Hsf g(\xbm) - \Hsf \ghat(\xbm)\|_2^2\right] \leq \frac{\nu^2}{w},
\end{equation*}
where the expectations are taken over $\{i_1, \dots, i_w\}$.
\end{assumption}
The variance bounds are standard in stochastic algorithms. The variance bounds on the gradient and Hessian approximations are thus reasonable in this context. The decrease of the bounds for higher values of $w$ is natural since $\ghat$ is an unbiased estimator of $g$ obtained averaging $w$ independent terms. 

\begin{theorem}
\label{Thm:Thm1}
Run the forward pass of ODER for $k \geq 1$ iterations under Assumptions~\ref{As:LipConObj}-\ref{As:BoundedVariance} using the step size $0 < \gamma < 1/(\lambda+\tau)$. Then, the sequence of forward pass iterates satisfies
\begin{equation}
\E\left[\|\xbm^k - \xbmbar\|_2\right] \leq  \eta^k R + \frac{\gamma \nu}{(1-\eta)\sqrt{w}},
\end{equation}
for some constant $0 < \eta < 1$ where $\xbmbar \in \Fix(\Tsf)$.
\end{theorem}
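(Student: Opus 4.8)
The plan is to establish a one-step contraction-type recursion for $\E[\|\xbm^k-\xbmbar\|_2]$ and then unroll it as a geometric series. First I would fix a point $\xbmbar\in\Fix(\Tsf)$ of the full-batch SD-RED operator from~\eqref{Eq:ZeroRED}, so that $\Tsf_\thetabm(\xbmbar)=\xbmbar$. The key decomposition is to insert the exact (batch) operator $\Tsf_\thetabm$ between the online iterate and the fixed point:
\begin{equation*}
\|\xbm^k-\xbmbar\|_2 = \|\Tsfhat_\thetabm(\xbm^{k-1})-\Tsf_\thetabm(\xbmbar)\|_2 \leq \|\Tsfhat_\thetabm(\xbm^{k-1})-\Tsf_\thetabm(\xbm^{k-1})\|_2 + \|\Tsf_\thetabm(\xbm^{k-1})-\Tsf_\thetabm(\xbmbar)\|_2.
\end{equation*}
The second term is handled by showing $\Tsf_\thetabm$ is a contraction with some modulus $\eta<1$: since $\Tsf_\thetabm(\xbm)=\xbm-\gamma(\nabla g(\xbm)+\tau\Rsf_\thetabm(\xbm))$, under Assumption~\ref{As:LipConObj} (convex, $\lambda$-smooth $g$) and Assumption~\ref{As:LipConDen} ($\kappa$-contractive $\Dsf_\thetabm$), a standard argument bounds $\|\Isf-\gamma\nabla_\xbm\Tsf_\thetabm(\xbm)\|$ (or directly estimates the operator via its gradient) to yield a contraction factor such as $\eta=\max\{|1-\gamma\tau(1-\kappa)|,\ |1-\gamma(\lambda+\tau(1+\kappa))|\}$ or a cleaner bound of the form $\eta = 1-\gamma\tau(1-\kappa)$ given the step-size restriction $0<\gamma<1/(\lambda+\tau)$; I would verify $\eta\in(0,1)$ under that restriction. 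This is essentially the convergence argument already invoked in the discussion after Assumption~\ref{As:LipConDen}.

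The first term is the stochastic perturbation: $\Tsfhat_\thetabm(\xbm^{k-1})-\Tsf_\thetabm(\xbm^{k-1}) = -\gamma(\nabla\ghat(\xbm^{k-1})-\nabla g(\xbm^{k-1}))$, so its norm equals $\gamma\|\nabla\ghat(\xbm^{k-1})-\nabla g(\xbm^{k-1})\|_2$. Taking expectations, I would use Jensen's inequality together with Assumption~\ref{As:BoundedVariance} to get $\E[\|\nabla\ghat(\xbm^{k-1})-\nabla g(\xbm^{k-1})\|_2]\leq (\E[\|\nabla\ghat(\xbm^{k-1})-\nabla g(\xbm^{k-1})\|_2^2])^{1/2}\leq \nu/\sqrt{w}$; here one must be slightly careful because $\xbm^{k-1}$ is itself random and correlated with earlier minibatch draws, but since the indices $\{i_1,\dots,i_w\}$ at step $k$ are drawn independently of $\xbm^{k-1}$, conditioning on the history and then using the tower property gives the bound cleanly. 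Combining the two pieces yields the recursion
\begin{equation*}
\E[\|\xbm^k-\xbmbar\|_2] \leq \eta\,\E[\|\xbm^{k-1}-\xbmbar\|_2] + \frac{\gamma\nu}{\sqrt{w}}.
\end{equation*}

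Finally I would unroll this recursion starting from $k=0$, using $\|\xbm^0-\xbmbar\|_2\leq R$ from Assumption~\ref{As:BoundedIterates}, to obtain
\begin{equation*}
\E[\|\xbm^k-\xbmbar\|_2] \leq \eta^k R + \frac{\gamma\nu}{\sqrt{w}}\sum_{j=0}^{k-1}\eta^j \leq \eta^k R + \frac{\gamma\nu}{(1-\eta)\sqrt{w}},
\end{equation*}
which is exactly the claimed bound. I expect the main obstacle to be the first step — cleanly establishing that the batch operator $\Tsf_\thetabm$ is a contraction with an explicit modulus $\eta<1$ valid for the entire admissible step-size range $0<\gamma<1/(\lambda+\tau)$, since the gradient of $\Tsf_\thetabm$ combines the (only positive semidefinite, not strongly convex) Hessian of $g$ with the Jacobian of a merely $\kappa$-contractive denoiser residual, and one must check the eigenvalue bounds carefully so that the contraction comes entirely from the $\tau(1-\kappa)$ regularization term; the stochastic perturbation and the geometric-series unrolling are then routine.
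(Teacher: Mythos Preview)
Your proposal is correct and follows essentially the same route as the paper: the paper's proof also decomposes $\|\xbm^k-\xbmbar\|_2$ via the triangle inequality into a contraction term $\|\Tsf_\thetabm(\xbm^{k-1})-\Tsf_\thetabm(\xbmbar)\|_2$ and a stochastic perturbation term $\gamma\|\nabla g(\xbm^{k-1})-\nabla\ghat(\xbm^{k-1})\|_2$, uses a separate lemma to show $\Tsf_\thetabm$ is $\eta$-contractive with exactly the constant $\eta=1-\gamma\tau(1-\kappa)$ you suggest (via a spectral-norm bound on $\nabla_\xbm\Tsf_\thetabm$), applies Jensen plus Assumption~\ref{As:BoundedVariance} after conditioning on $\xbm^{k-1}$, and then unrolls the recursion with Assumption~\ref{As:BoundedIterates}. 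Your treatment of the conditioning step is, if anything, slightly more careful than the paper's.
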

Theorem~\ref{Thm:Thm1} is a variation on the convergence results for online RED/PnP~\cite{Sun.etal2018a, Wu.etal2020, Tang.Davies2020, Sun.etal2021}, showing that the forward pass converges to $\xbmbar \in \Fix(\Tsf)$ up to an error term that can be controlled via $\gamma$ and $w$.

\begin{theorem}
\label{Thm:NeumannOnRED}
Run the backward pass of ODER for $k \geq 1$ iterations under Assumptions~\ref{As:LipConObj}-\ref{As:BoundedVariance} from $\bbm^0 = \zerobm$ using the step-size $0 < \gamma < 1/(\lambda+\tau)$. Then, the sequence of backward pass iterates satisfies
\begin{equation}
\E\left[\|\bbm^k-\bbmbar\|_2\right] \leq B_1 \eta^k + \frac{B_2}{\sqrt{w}},
\end{equation}
where $0 < \eta < 1$, $B_1 > 0$ and $B_2 > 0$ are constants independent of $k$ and $w$, and $\bbmbar \in \Fix(\Fsf)$.
\end{theorem}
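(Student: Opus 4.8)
The plan is to view the backward pass as a Banach fixed-point iteration for the affine operator $\Fsfhat$ and to compare it to the target $\bbmbar\in\Fix(\Fsf)$ by inserting the fixed point $\widehat{\bbmbar}$ of $\Fsfhat$ as an intermediary: $\|\bbm^k-\bbmbar\|\leq\|\bbm^k-\widehat{\bbmbar}\|+\|\widehat{\bbmbar}-\bbmbar\|$. The first term will decay geometrically by contractivity of $\Fsfhat$; the second is a perturbation-of-fixed-points term that I will control with the Lipschitz assumptions, the variance bound of Assumption~\ref{As:BoundedVariance}, and the forward-pass guarantee of Theorem~\ref{Thm:Thm1}.

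First I would establish that both backward operators are $\eta$-contractions. Write $\bm{J}\defn\nabla_\xbm\Tsf_\thetabm(\xbmbar)=(1-\gamma\tau)\Isf-\gamma\Hsf g(\xbmbar)+\gamma\tau\nabla_\xbm\Dsf_\thetabm(\xbmbar)$ and let $\widehat{\bm{J}}\defn\nabla_\xbm\Tsfhat_\thetabm(\xbm^K)$, which has the same form with $\Hsf g$ replaced by $\Hsf\ghat$ and $\xbmbar$ by $\xbm^K$. By Assumption~\ref{As:LipConObj} each $\Hsf g_i$ satisfies $\zerobm\preceq\Hsf g_i\preceq\lambda\Isf$, hence so do the averages $\Hsf g$ and $\Hsf\ghat$; by Assumption~\ref{As:LipConDen} the $\kappa$-Lipschitzness of $\Dsf_\thetabm$ gives $\|\nabla_\xbm\Dsf_\thetabm\|\leq\kappa$ pointwise. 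A spectral estimate on the symmetric matrix $(1-\gamma\tau)\Isf-\gamma\Hsf g$, whose eigenvalues lie in $(0,1-\gamma\tau]$ when $\gamma<1/(\lambda+\tau)$, then yields $\|\bm{J}\|,\|\widehat{\bm{J}}\|\leq 1-\gamma\tau(1-\kappa)=:\eta\in(0,1)$, matching the $\eta$ of Theorem~\ref{Thm:Thm1}. Since $\Fsf$ and $\Fsfhat$ are affine with linear parts $\bm{J}^\Tsf$ and $\widehat{\bm{J}}^\Tsf$, both are $\eta$-contractions; in particular $\Fsfhat$ has the unique fixed point $\widehat{\bbmbar}=(\Isf-\widehat{\bm{J}})^{-\Tsf}(\xbm^K-\xbmast)$, and Assumption~\ref{As:BoundedIterates} gives $\|\xbm^K-\xbmast\|\leq R+\|\xbmbar-\xbmast\|$, so $\|\widehat{\bbmbar}\|$ is bounded almost surely by a constant.

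Next, geometric contraction of $\Fsfhat$ with $\bbm^0=\zerobm$ gives $\|\bbm^k-\widehat{\bbmbar}\|\leq\eta^k\|\widehat{\bbmbar}\|\leq c_0\eta^k$ by the previous step. Subtracting the identities $\widehat{\bbmbar}=\Fsfhat(\widehat{\bbmbar})$ and $\bbmbar=\Fsf(\bbmbar)$, inserting $\Fsf(\widehat{\bbmbar})$, and using $\eta$-contractivity of $\Fsf$ yields $\|\widehat{\bbmbar}-\bbmbar\|\leq(1-\eta)^{-1}\|\Fsfhat(\widehat{\bbmbar})-\Fsf(\widehat{\bbmbar})\|\leq(1-\eta)^{-1}\big(\|\widehat{\bm{J}}-\bm{J}\|\,\|\widehat{\bbmbar}\|+\|\xbm^K-\xbmbar\|\big)$. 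I would then bound the right-hand side in expectation via the splitting $\widehat{\bm{J}}-\bm{J}=-\gamma(\Hsf\ghat(\xbm^K)-\Hsf g(\xbm^K))-\gamma(\Hsf g(\xbm^K)-\Hsf g(\xbmbar))+\gamma\tau(\nabla_\xbm\Dsf_\thetabm(\xbm^K)-\nabla_\xbm\Dsf_\thetabm(\xbmbar))$, whose last two terms are at most $\gamma(\lambda+\tau\alpha)\|\xbm^K-\xbmbar\|$ by the $\lambda$-Lipschitz Hessian of $g$ and the $\alpha$-Lipschitz $\nabla_\xbm\Dsf_\thetabm$. Conditioning on the forward pass and applying Jensen's inequality to Assumption~\ref{As:BoundedVariance} gives $\E\|\Hsf\ghat(\xbm^K)-\Hsf g(\xbm^K)\|\leq\nu/\sqrt{w}$, and Theorem~\ref{Thm:Thm1} gives $\E\|\xbm^K-\xbmbar\|\leq\eta^K R+\gamma\nu/((1-\eta)\sqrt{w})$. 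Putting these together produces $\E\|\bbm^k-\bbmbar\|\leq c_0\eta^k+c_1\eta^K+c_2/\sqrt{w}$ with constants depending only on $\gamma,\tau,\lambda,\alpha,\kappa,R,\nu,\|\xbmbar-\xbmast\|$; folding $\eta^K\leq\eta^k$ into the first term (the backward pass being run for $k\leq K$ iterations, i.e.\ the forward pass run to at least the same depth) gives the claimed $\E\|\bbm^k-\bbmbar\|\leq B_1\eta^k+B_2/\sqrt{w}$.

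The main obstacle is the expectation bound on $\|\widehat{\bm{J}}-\bm{J}\|$: it must be separated into a genuinely stochastic part---the minibatch Hessian against the full Hessian \emph{at the same point} $\xbm^K$, which Assumption~\ref{As:BoundedVariance} controls only after conditioning on $\xbm^K$, so one must use that the backward-pass minibatch is drawn independently of the forward pass---and a deterministic part measuring the gap between the two evaluation points $\xbm^K$ and $\xbmbar$, which is precisely what Theorem~\ref{Thm:Thm1} supplies. The contraction estimate is the same computation underlying Theorem~\ref{Thm:Thm1}, and using the stochastic fixed point $\widehat{\bbmbar}$ as an intermediary sidesteps having to control correlated per-iteration noise, so the remaining work is routine bookkeeping.
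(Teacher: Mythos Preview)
Your proposal is correct and uses the same ingredients as the paper---contractivity of the backward map via the spectral estimate of Lemma~\ref{Sup:Lem:ContractionofU}, the splitting of $\widehat{\bm{J}}-\bm{J}$ into a minibatch-variance piece controlled by Assumption~\ref{As:BoundedVariance} and a point-shift piece controlled by the Lipschitz assumptions together with Theorem~\ref{Thm:Thm1}, and the fold $\eta^K\leq\eta^k$---but organizes them through a different decomposition. The paper argues directly by the one-step recursion $\|\bbm^k-\bbmbar\|\leq\eta\|\bbm^{k-1}-\bbmbar\|+\|\Fsfhat(\bbm^{k-1})-\Fsf(\bbm^{k-1})\|$, bounds the second term at the iterate $\bbm^{k-1}$ using $\|\bbm^{k-1}\|\leq 2R$ from Assumption~\ref{As:BoundedIterates}, takes conditional expectations, and iterates. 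You instead pivot around the random fixed point $\widehat{\bbmbar}\in\Fix(\Fsfhat)$, so the geometric part is a pure Banach contraction and all the stochastic error is concentrated in the single perturbation term $\|\widehat{\bbmbar}-\bbmbar\|$.

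One caveat worth flagging: your argument needs $\Fsfhat$ to be a \emph{fixed} affine map across all backward steps (a single minibatch reused for every $k$), since otherwise $\widehat{\bbmbar}$ is not well defined. The paper's conditional-expectation structure---conditioning on $(\xbm^K,\bbm^{k-1})$ and then invoking the variance bound---is written for a fresh minibatch drawn at each backward step, and in that model there is no single $\widehat{\bbmbar}$ to pivot around. The paper's recursion also only needs $\Fsf$ (not $\Fsfhat$) to be a contraction. Under the single-minibatch reading your route is complete and arguably cleaner, separating transient decay from asymptotic bias at the outset; under the fresh-minibatch reading one must revert to the per-step recursion the paper uses.
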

Theorem~\ref{Thm:NeumannOnRED} shows that the online backward pass in expectation converges to $\bbmbar$ up to an error term that can be controlled via $w$. The complete expressions for constants $B_1$ and $B_2$ are in the proof. 

\begin{assumption}
\label{As:LipschitzCon}
Function $\,\ell\,$ has a global minimizer $\;\thetabm^{\ast}$ and has a $L$-Lipschitz continuous gradient, which means that for all $\thetabm, \phibm$, we have  $\|\nabla \ell(\thetabm) - \nabla \ell(\phibm)\|_2 \leq L\|\thetabm-\phibm\|_2$.
\end{assumption}
\begin{assumption}
\label{As:UnbiasedAssumptionF}
The loss function in~\eqref{Eq:Loss} and indices $\{j_t\}$ in~\eqref{Eq:TrainUpdate} are such that
\begin{equation*}
\E\left[\nabla\ell_{j_t}(\thetabm)\right]\,=\,\nabla \ell(\thetabm)\quad \text{and} \quad \E\left[\|\nabla\ell_{j_t}(\thetabm) - \nabla\ell(\thetabm)\|_2^2\right]\,\leq\, \sigma^2,
\end{equation*}
where the expectations are taken with respect to the random index uniformly as $j_t \in \{1,\dots, p\}$.
\end{assumption}
The existence of a minimizer and the Lipschitz continuity of the loss gradient are standard assumptions in the literature~\cite{Nesterov2004, Jain.Kar2017, Bottou.etal2018}.  Note that we do \emph{not} assume that the training loss $\ell$ is convex.
 Assumption~\ref{As:UnbiasedAssumptionF} is the standard assumption used in the analysis of SGD.
\begin{theorem}
\label{Thm:MainConvRes}
Train ODER using SGD for $T \geq 1$ iterations under Assumptions~1-6 using the step-size parameters $0 < \beta \leq 1/L$ and the minibatch size $w \geq 1$. Select a large enough number of forward and backward pass iterations $K \geq 1$ to satisfy $0<\eta^K\leq1/{\sqrt{w}}$. Then, we have that
\begin{equation*}
\label{EqTheorem1}
\frac{1}{T}\sum_{t=0}^{T-1}\E\left[\|\nabla\ell(\thetabm^t)\|_2^2\right]\leq\frac{2(\ell(\thetabm^0)-\ell(\thetabm^{\ast}))}{\beta T} + \frac{C_1}{\sqrt{w}} + \beta C_2\,.
\end{equation*}
where $C_1>0$ and $C_2>0$ are constants independent of $T$ and $w$.
\end{theorem}
The complete expressions for constants $C_1$ and $C_2$ are in the proof. Theorem~\ref{Thm:MainConvRes} allows us to understand the ability of the iterates generated using~\eqref{Eq:TrainUpdate} to approximate the stationary points of the desired loss~\eqref{Eq:Loss}. The error terms in the bound depend on the training step-size $\beta$ and the minibatch size $w$, both of which can be controlled during training. In summary, we presented several theoretical results that give insights into ODER by stating explicit sufficient conditions and approximation bounds.

\vspace{-0.3em}
\section{Numerical Evaluation}
\label{Sec:NumericalValidation}
We numerically validate ODER in the context of three computational imaging modalities: IDT, sparse-view CT, and parallel MRI. Our goal is to both (a) empirically evaluate the performance of ODER and (b) highlight its effectiveness for processing a large number of measurements. We adopt $\ell_2$-norm loss $g(\xbm)=\frac{1}{2}\|\ybm-\Abm\xbm\|_2^2$ as the data-fidelity term for all three imaging modalities. 

ODER is compatible with any CNN architecture used to implement $\Dsf_\thetabm$. We use a~\emph{tiny} U-Net architecture~\cite{Liu.etal2021} for ODER and the traditional RED (DEQ)~\cite{Gilton.etal2021}. We have added spectral normalization~\cite{Miyato.etal2018} to all the layers of CNN for stability (see the supplement for the numerical evaluation of the contractiveness of $\Tsf_\thetabm$ on all three modalities). Similar to~\cite{Gilton.etal2021}, the CNN prior of ODER and RED (DEQ) are initialized using pre-trained denoisers. During the training of both ODER and RED (DEQ), we use the \emph{Nesterov acceleration}~\cite{Nesterov2004} for the forward pass and \emph{Anderson acceleration}~\cite{anderson1965} for the backward pass. We also adopt the stopping criterion from~\cite{Gilton.etal2021, bai.etal2022} by setting residual tolerance to $10^{-3}$ for both forward and backward iterations (see supplement for additional details).
\begin{figure}[t]
\centering\includegraphics[width=\textwidth]{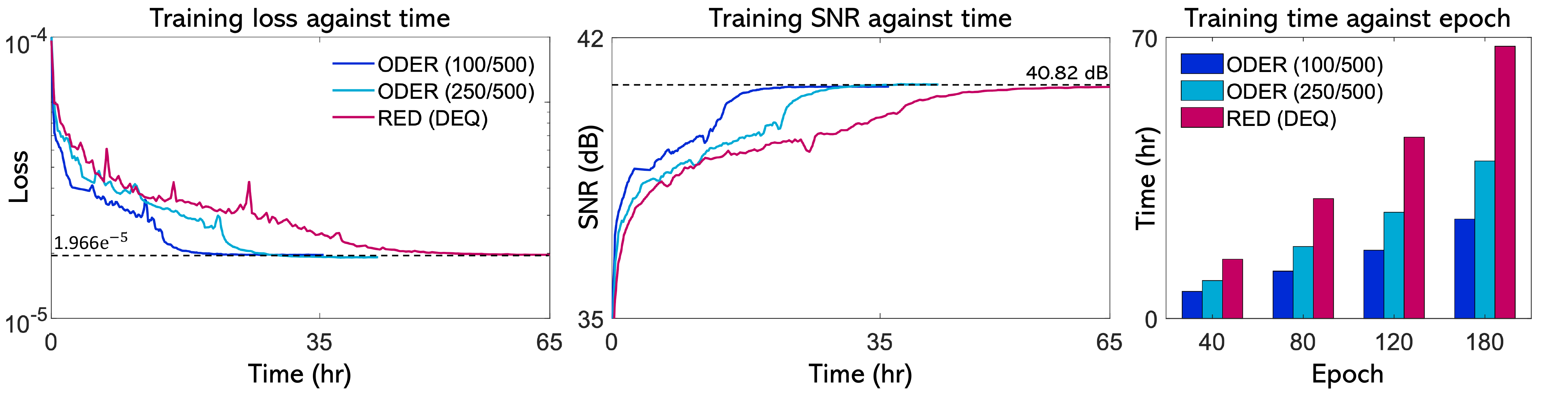}
\caption{~\emph{Quantitative evaluation of ODER on IDT for two minibatch sizes $w \in \{100, 250\}$ used at each step of the network against RED (DEQ) using the full batch of $b = 500$ measurements. The left figure plots the loss against time in hours for different values of $w$ evaluated on the training set. The middle and right figures plot the corresponding SNR against time and the amount of time required to reach a certain epoch for different values of $w$. By using minibatches $ 1\leq w \leq b$, ODER can achieve nearly $2.5 \times$ improvement in training time over RED (DEQ) for a similar final imaging quality.}}
\label{Fig:IDTtraining}
\vspace{-.5em}
\end{figure}
\begin{table}[t]
  \caption{IDT image recovery for different input SNR (dB) values on images from~\cite{aksac2019brecahad}.  We also present model size and per-iteration memory usage for the measurements, and average test-times.}
      \centering
      \renewcommand\arraystretch{1.2}
      {\footnotesize
      \scalebox{0.96}{
      \begin{tabular*}{12.78cm}{L{80pt}||C{30pt}lC{80pt}lC{80pt}lC{80pt}lC{80pt}lC{80pt}lC{80pt}l}
          \hline
          \multirow{2}{4em}{\textbf{Method}}& \multicolumn{3}{c}{\textbf{Input SNR (dB)}} & \multicolumn{2}{c}{\textbf{Size} } &\multicolumn{2}{c}{\textbf{Time}} \\
          \cline{2-9}
          &\multicolumn{1}{c}{15}  & \multicolumn{1}{c}{20} & \multicolumn{1}{c}{25} &  \multicolumn{1}{c}{Model} &  \multicolumn{1}{c}{Meas.} & \multicolumn{1}{c}{CPU} & \multicolumn{1}{c}{GPU}\\\hline\hline
          \textbf{TV}       & {38.34}    & {38.77}   & \multicolumn{1}{c:}{38.85} & \multicolumn{1}{c}{-----} &  \multicolumn{1}{c:}{3.56 GB} & \multicolumn{1}{c}{215.3s}   & \multicolumn{1}{c}{32.24s} \\
          \textbf{U-Net}       & {38.35}      & {38.89}       & \multicolumn{1}{c:}{39.02}     & \multicolumn{1}{c}{118.2 MB}     & \multicolumn{1}{c:}{-----} &  \multicolumn{1}{c}{2.811s}   & \multicolumn{1}{c}{0.089s} \\        
           \textbf{ISTA-Net+}       & 38.37   & 38.94    & \multicolumn{1}{c:}{39.27}    & \multicolumn{1}{c}{1.21 MB}     &  \multicolumn{1}{c:}{3.56 GB} &  \multicolumn{1}{c}{7.081s}   & \multicolumn{1}{c}{0.216s} \\
            \textbf{SGD-Net (100)}    & \textcolor{black}{39.62}     & {40.26}     & \multicolumn{1}{c:}{\textcolor{black}{40.47}}      & \multicolumn{1}{c}{29.7 MB}   &  \multicolumn{1}{c:}{0.71GB}  &  \multicolumn{1}{c}{6.697s}   & \multicolumn{1}{c}{0.207s}  \\
           \textbf{RED (Denoising)}      & {39.52}   & {40.04}     & \multicolumn{1}{c:}{40.41}    & \multicolumn{1}{c}{118.2 MB}   &  \multicolumn{1}{c:}{3.56 GB} &  \multicolumn{1}{c}{285.5s}   & \multicolumn{1}{c}{7.528s} \\
           \cdashline{1-9}
           \textbf{ODER (100)}      & 40.28    & 41.42     &  \multicolumn{1}{c:}{41.94}     &  \multicolumn{1}{c}{29.7 MB}  &  \multicolumn{1}{c:}{0.71 GB} & \multicolumn{1}{c}{63.31s}   & \multicolumn{1}{c}{2.051s} \\
           \textbf{ODER (250)}      &  \textcolor{black}{\textbf{40.57}}    & 41.50     &  \multicolumn{1}{c:}{\textcolor{black}{\textbf{41.96}}}     &  \multicolumn{1}{c}{29.7 MB}  &  \multicolumn{1}{c:}{1.76 GB} &  \multicolumn{1}{c}{118.7s}   & \multicolumn{1}{c}{3.628s}   \\
           \textbf{RED (DEQ)}       & 40.54  & \textcolor{black}{\textbf{41.51} }    &  \multicolumn{1}{c:}{41.95}     & \multicolumn{1}{c}{29.7 MB}   &  \multicolumn{1}{c:}{3.56 GB} & \multicolumn{1}{c}{202.3s} &   \multicolumn{1}{c}{6.362s} \\\hline
      \end{tabular*}}
      }
  \label{Tab:table1}
  \end{table}

For reference we include several other well-known baseline methods, including TV~\cite{Rudin.etal1992}, U-Net~\cite{Ronneberger.etal2015} and ISTA-Net$^{+}$~\cite{zhang2018ista}. We also include the unfolded \emph{RED (Unfold)}~\cite{Liu.etal2021} and the traditional \emph{RED (Denoising)}~\cite{Romano.etal2017} to illustrate the improvements due to DEQ.   TV is an iterative method that does not require training, while other methods are all DL-based with publicly available implementations. We use the U-Net architecture in~\cite{Ronneberger.etal2015} as the AWGN denoiser for RED, while we use the same tiny U-Net for RED (Unfold) as in RED (DEQ). For each imaging modality, we trained the denoiser in RED (Denoising) for AWGN removal at five noise levels corresponding to $\sigma\in$\{2, 7, 5, 10, 15\}. For each experiment, we select the denoiser achieving the highest SNR. In all the experiments, we train ODER and RED (DEQ) using the same training strategy and parameter initialization settings. We use \texttt{fminbound} in the \texttt{scipy.optimize} toolbox to  identify the optimal regularization parameters for TV, RED (Denoising), ODER and RED (DEQ) at the inference time.
\begin{figure}[t]
\centering\includegraphics[width=1.01\textwidth]{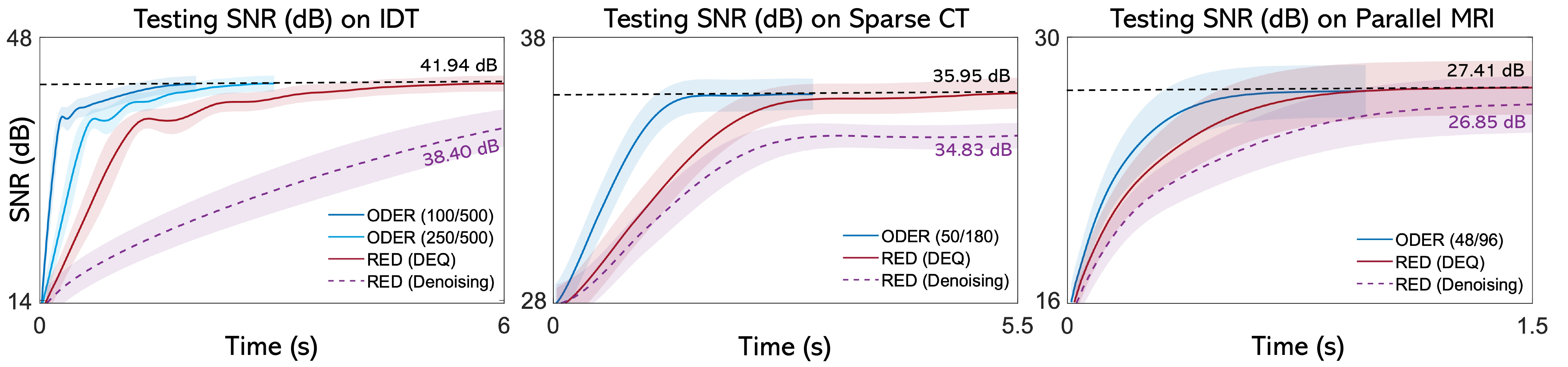}
\caption{~\emph{Illustration of the convergence speed of ODER, RED (DEQ) and RED (Denoising) for three imaging applications. \textbf{Left:} IDT with the full batch of $b=500$ measurements under 25 dB input SNR. \textbf{Middle:} Sparse-view CT with $b=180$ projection views. \textbf{Right:} Parallel MRI at  $20\%$ sampling with $b=96$ simulated coil sensitivity maps. ODER achieves $1.4\times\sim3\times$ speedup over RED (DEQ) at inference time without significant degradation in accuracy across three problems.}}
\label{Fig:testsnr}
\vspace{-.5em}   
\end{figure}
 \begin{table}[t]
\caption{Sparse-view CT image recovery in terms of SNR (dB) and SSIM on test images from~\cite{mccollough2016tu}. The last two columns provide the average test-times for a $512\times512$ image using 180 views.}
    \centering
    \renewcommand\arraystretch{1.2}
    {\footnotesize
    \scalebox{0.96}{
    \begin{tabular*}{14.58cm}{L{74pt}||C{32pt}lC{32pt}lC{32pt}lC{32pt}lC{32pt}lC{32pt}lC{32pt}l}
        \hline
        \multirow{2}{4em}{\textbf{Method}}& \multicolumn{6}{c}{\textbf{Projection Views} } & \multicolumn{2}{c}{\textbf{Time}} \\
        \cline{2-9}
        &\multicolumn{2}{c}{90}  & \multicolumn{2}{c}{120} & \multicolumn{2}{c}{180} & CPU & GPU\\\hline\hline
        \textbf{TV}       & {29.44}            & \multicolumn{1}{c:}{0.9688}            & {30.27}            & \multicolumn{1}{c:}{0.9731}     & 31.33 & \multicolumn{1}{c:}{0.9771}            & \multicolumn{1}{c}{ 768.1s}   & \multicolumn{1}{c}{ 15.61s} \\
        \textbf{U-Net}       & {33.05}            & \multicolumn{1}{c:}{0.9741}            & {34.02}            & \multicolumn{1}{c:}{0.9790}     & 35.11 & \multicolumn{1}{c:}{0.9815}            & \multicolumn{1}{c}{4.014s}  & \multicolumn{1}{c}{ 0.056s} \\        
         \textbf{ISTA-Net+}       & 32.15            & \multicolumn{1}{c:}{0.9706}           & 33.38           & \multicolumn{1}{c:}{0.9755}     & 34.83 & \multicolumn{1}{c:}{0.9812}           & \multicolumn{1}{c}{ 37.38s}  & \multicolumn{1}{c}{0.344s}  \\
          \textbf{RED (Unfold)}    & \textcolor{black}{33.97}            & \multicolumn{1}{c:}{\textcolor{black}{0.9753}}            & {35.01}            & \multicolumn{1}{c:}{0.9824}   & 35.78  & \multicolumn{1}{c:}{0.9835}           &\multicolumn{1}{c}{29.93s}   & \multicolumn{1}{c}{0.256s} \\
         \textbf{RED (Denoising)}      & {32.64}            & \multicolumn{1}{c:}{0.9708}            & {33.60}            & \multicolumn{1}{c:}{0.9789}   & 34.83 & \multicolumn{1}{c:}{0.9807}            & \multicolumn{1}{c}{498.5s}   & \multicolumn{1}{c}{5.549s} \\
         \cdashline{1-9}
         \textbf{ODER}      & 34.40           & \multicolumn{1}{c:}{\textcolor{black}{{0.9824}}}          & 35.12           &  \multicolumn{1}{c:}{0.9841}  & 35.91 & \multicolumn{1}{c:}{0.9859}            &\multicolumn{1}{c}{334.1s}   &\multicolumn{1}{c}{3.113s} \\
         \textbf{RED (DEQ)}       & \textcolor{black}{\textbf{34.61}}           &  \multicolumn{1}{c:}{\textbf{0.9826}}           & \textcolor{black}{\textbf{35.26}}            & \multicolumn{1}{c:}{\textcolor{black}{\textbf{0.9845}}}   & \textcolor{black}{\textbf{\;35.95} } &\multicolumn{1}{c:}{\textcolor{black}{\textbf{0.9861}}}           & \multicolumn{1}{c}{616.1s}             &  \multicolumn{1}{c}{5.466s} \\\hline
    \end{tabular*}}
    }
\label{Tab:table2}
\end{table}

\begin{figure}[t]
\centering\includegraphics[width=0.95\textwidth]{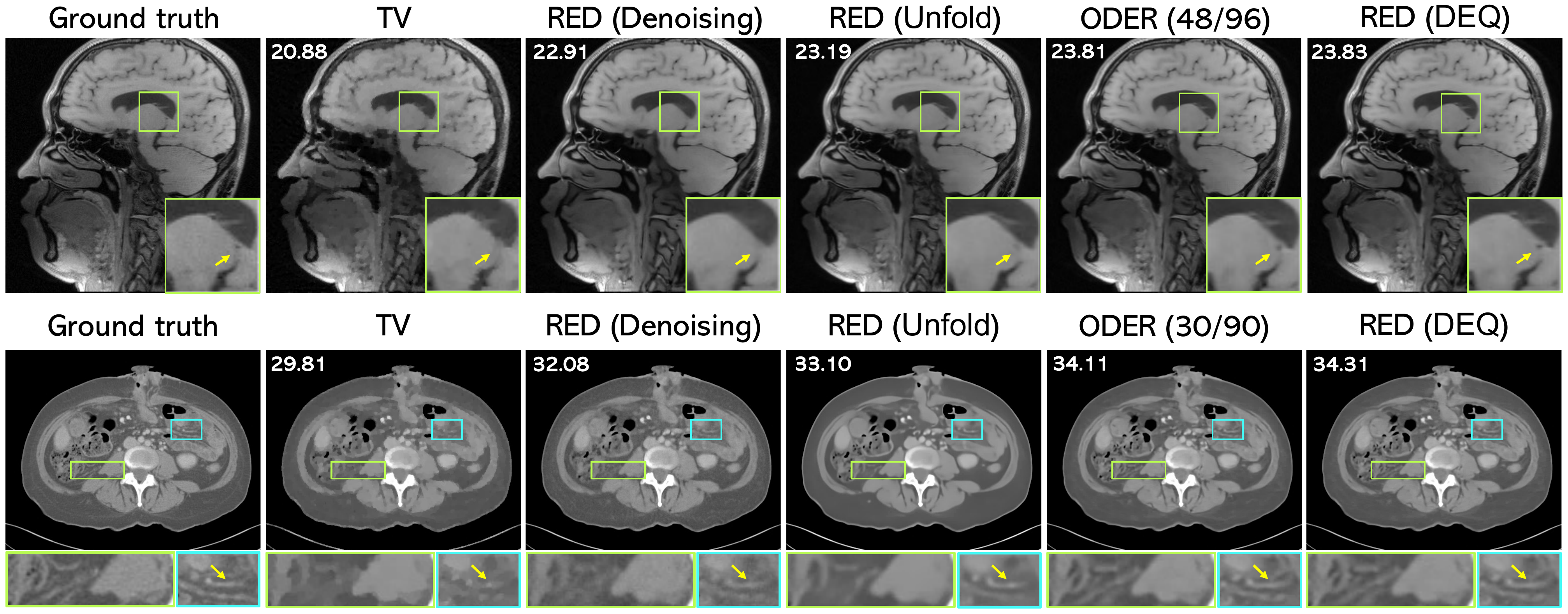}
\caption{~\emph{Visual evaluation of several well-known methods on two imaging problems: (top) Reconstruction of a brain image from its radial Fourier measurements at $10\%$ sampling with $b=96$ simulated coil sensitivity maps; (bottom) Reconstruction of a body CT image from $b=90$ projection views. Note the similar performance of ODER and RED (DEQ), and the improvement over RED (Denoising) /RED (Unfold) due to the usage of DEQ learning. Best viewed by zooming in the display.}}
\label{Fig:vsMRICT}
\end{figure}
 \begin{table}[t]
\caption{Average SNR (dB), SSIM, and running times for several methods on MRI images. The last two columns provide the average test-times for a $320\times320$ image using 96 simulated coils.}
    \centering
    \renewcommand\arraystretch{1.2}
    {\footnotesize
    \scalebox{1}{
    \begin{tabular*}{14.62cm}{L{75pt}||C{32pt}lC{32pt}lC{32pt}lC{32pt}lC{32pt}lC{32pt}lC{32pt}l}
        \hline
        \multirow{2}{4em}{\textbf{Method}}& \multicolumn{4}{c}{\textbf{MRI Set1~\cite{zhang2018ista}} } & \multicolumn{2}{c}{\textbf{MRI Set2~\cite{knoll2020fastmri}}} & \multicolumn{2}{c}{\textbf{Time}} \\
        \cline{2-9}
        &\multicolumn{2}{c}{10\%}  & \multicolumn{2}{c}{20\%} & \multicolumn{2}{c}{10\%} & CPU & GPU\\\hline\hline
        \textbf{TV}       & {20.88}            & \multicolumn{1}{c:}{0.9059}            & {24.87}            & \multicolumn{1}{c:}{0.9445}     & 24.84 & \multicolumn{1}{c:}{0.9674}            & 122.2s   & 7.591s \\
         \textbf{U-Net}       & 23.07            & \multicolumn{1}{c:}{0.9329}           & 26.42           & \multicolumn{1}{c:}{0.9562}     & 26.04 & \multicolumn{1}{c:}{0.9712}           & 0.683s   & 0.011s  \\        
         \textbf{ISTA-Net+}       & 22.95            & \multicolumn{1}{c:}{0.9298}           & 26.31           & \multicolumn{1}{c:}{0.9546}     & 25.82 & \multicolumn{1}{c:}{0.9693}           & 8.993s   & 0.264s  \\
          \textbf{RED (Unfold)}    & \textcolor{black}{23.37}            & \multicolumn{1}{c:}{\textcolor{black}{0.9363}}            & {26.81}            & \multicolumn{1}{c:}{0.9591}   & 26.37  & \multicolumn{1}{c:}{0.9744}           &8.744s   & 0.231s \\
         \textbf{RED (Denoising)}      & {23.29}            & \multicolumn{1}{c:}{0.9352}            & {26.85}            & \multicolumn{1}{c:}{0.9598}   & 26.42 & \multicolumn{1}{c:}{0.9748}            & 272.4s   & 7.511s \\
         \cdashline{1-9}
         \textbf{ODER}      & 24.08           &  \multicolumn{1}{c:}{0.9442}          & 27.22           &  \multicolumn{1}{c:}{0.9649}  & 27.03 & \multicolumn{1}{c:}{0.9783}            &120.0s   & \textcolor{black}{3.005s}  \\
         \textbf{RED (DEQ)}       & \textcolor{black}{\textbf{24.10}}           &  \multicolumn{1}{c:}{\textcolor{black}{\textbf{0.9451}}}           & \textcolor{black}{\textbf{27.41}}            & \multicolumn{1}{c:}{\textcolor{black}{\textbf{0.9660}}}   & \textcolor{black}{\textbf{27.10}} &\multicolumn{1}{c:}{\textcolor{black}{\textbf{0.9789}}}   &  \textcolor{black}{{166.9s}}            & \textcolor{black}{{4.577s}}   \\\hline
    \end{tabular*}}
    }
\label{Tab:table3}
\end{table}

\subsection{Image Reconstruction in IDT}
\label{Sec:ValidationIDT}
IDT~\cite{Ling.etal18} is a data intensive computational imaging modality that seeks to recover the spatial distribution of the complex-valued permittivity contrast of an object given a set of its intensity-only measurements. Specifically, $\Abm$ consists of a set of $b$ complex measurement operators $[\Abm_1, \dots, \Abm_b]^{\Tsf}$, where each $\Abm_i$ is a convolution corresponding to the $i$th measurement $\ybm_i$. In the simulation, we randomly extracted and cropped  400 slices of 416$\times$416 images for training, 28 images for validation and 56 images for testing from Brecahad database~\cite{aksac2019brecahad}. Following the setup in~\cite{Ling.etal18, Wu.etal2020}, we generated $b = \,$500 intensity measurements under AWGN corresponding to $\{15, 20, 25\}$ dB of input SNR.  ODER and RED (DEQ) were trained at the noise level corresponding to 20 dB input SNR. In our comparisons, we also included the recent SGD-Net~\cite{Liu.etal2021} method that corresponds to RED (Unfold), but uses stochastic data-consistency layers similar to ODER. SGD-Net allows for more unfolded iteration blocks by improving the usage of limited GPU memory. Both ODER and RED (DEQ) were trained using SGD, while all other methods were trained using Adam~\cite{Kingma.Ba2015}.

Fig.~\ref{Fig:IDTtraining} compares the average loss and SNR achieved by RED (DEQ) with ($b=500$) and ODER with $w\in\{100, 250\}$ during training. It took 67.49 hours to train RED (DEQ) for 180 epochs. It took 24.76 and 39.23 hours to train ODER with ($w=100$) and ($w=250$), respectively, for the same number epochs. Table~\ref{Tab:table1} provides the final SNR achieved by ODER and several baseline methods on the test data. The runtime in the table corresponds to the average inference time that excludes the model loading. ODER with ($w=100$) is around $3\times$ faster than RED (DEQ) on both GPU and CPU. Fig.~\ref{Fig:testsnr}~(\emph{left}) highlights the faster convergence of ODER compared to RED (DEQ) to the similar SNR.

ODER is memory efficient due to its online processing of measurements.  The memory considerations in IDT include the size of all the variables related to the desired image $\xbm$, the measured data ${\ybm_i}$, and the variables related to the measurement operator $\{\Abm_i\}$. ODER addresses the problem of storing and processing the measurements and the measurement operators on the GPU during end-to-end training. Table~\ref{Tab:table1} shows the total memory (GB) used by ODER and RED (DEQ) for reconstructing a $416 \times 416$ pixel permittivity image.  While RED (DEQ) requires $3.56$ GB of GPU memory in every iteration,  ODER with $w=100$ requires only $0.71$ GB, which is about $20\%$ of the full volume.

\vspace{-0.5em} 
\subsection{Image Reconstruction in Sparse-View CT}
\label{Sec:ValidationCT}
We consider simulated data obtained from the clinically realistic CT images provided by Mayo Clinic for the \emph{low dose CT grand challenge}~\cite{mccollough2016tu}. Specifically, $2070$ 2D slices of size $512 \times 512$ corresponding to 7 patients were used to train the models. The test images correspond to $55$ slices randomly selected from another patient. We implement the measurement operator $\bm{A}$ and its adjoint $\Abm^\Tsf$ with PyTorch implementation of \texttt{Radon} and \texttt{IRadon}~\footnote{The code is publicly available at \url{https://github.com/phernst/pytorch_radon}} transform. We assume that the CT machine is designed to project from nominal angles with $b\in\{90, 120, 180\}$ projection views that are evenly-distributed on a half circle and 724 detector pixels. We add Gaussian noise to the sinograms to make the input SNR equal to 50 dB. We empirically found that using Adam~\cite{Kingma.Ba2015} is around $2\times$ faster than applying SGD when training both ODER and RED (DEQ). We thus trained all learning-based methods using Adam. Table~\ref{Tab:table2} reports the average SNR and SSIM results for ODER with ($w/b$) of $\{30/90,40/120,50/180\}$ and all baselines. Fig.~\ref{Fig:testsnr} (middle) reports the convergence speed of ODER with ($w=50$) for sparse-view CT with full batch ($b=180$) views. The visual comparisons are in Fig.~\ref{Fig:vsMRICT} (bottom). Note how ODER matches the performance of RED (DEQ) and outperforms RED (Denoising) and RED (Unfold) across different projection views.

\vspace{-.5em}   
\subsection{Image Reconstruction in Accelerated Parallel MRI}
\label{Sec:ValidationMRI}
We simulated a multi-coil CS-MRI setup using radial Fourier sampling~\cite{Lustig.etal2007, Lustig.etal2008}. The measurement operator $\Abm$ thus consists of a set of $b$ complex measurement operators depending on a set of receiver coils $\{\Sbm_i\}$~\cite{Pruessmann.etal1999}. For each coil, we have $\Abm_i=\Pbm\Fbm\Sbm_i$, where $\Pbm$ is the diagonal sampling matrix, $\Fbm$ is the Fourier transform, and $\Sbm_i$ is the diagonal matrix of sensitivity maps. ODER is evaluated on two brain MRI datasets. The first dataset~\cite{zhang2018ista} provides $800$ slices of $256\times256$ images for training and $50$ slices for testing. The second dataset~\cite{knoll2020fastmri} contains a randomly selected $400$ volumes of $320\times320\times10$ images for training, and $32$ volumes for testing. We synthesized the total number of ($b=96$) 2D/3D coil sensitivity maps using the $\texttt{SigPy}$~\cite{Ong.etal2019} for each dataset, respectively. Since all the CNNs in our numerical study are 2D, we apply them slice-by-slice when forming 3D volumes (all slices are passed in parallel using batch processing). We trained all learning-based methods using Adam. Fig.~\ref{Fig:testsnr} (right) reports the convergence speed of ODER for CS-MRI at $20\%$ sampling. Table~\ref{Tab:table3} reports the average SNR and SSIM values for ODER with ($w=48$) and all baseline methods. The visual comparison can be found in Fig.~\ref{Fig:vsMRICT} (top) at $10\%$ sampling.
\section{Conclusion and Future Work}
\label{sec:conclusion}
\vspace{-.8em}   
This work proposes ODER as a new online DEQ learning method for RED, analyzes its theoretical properties in terms of convergence and accuracy, and applies it to three widely-used imaging inverse problems. ODER extends the recent DEQ approach in~\cite{Gilton.etal2021} by introducing randomized processing of measurements. Our extensive theoretical and numerical results corroborate the potential of ODER to reduce the computational/memory complexity of training and testing, while achieving similar imaging quality as RED (DEQ). The future work can explore to further improve our analysis and design distributed variants of ODER to enhance its performance on parallel computing architectures.

%

\section*{Acknowledgements}
Research presented in this article was supported by the NSF CAREER award CCF-2043134.

\newpage
\appendix

\setcounter{theorem}{0}

\section*{Supplementary Material}

We adopt the monotone operator theory~\cite{Bauschke.Combettes2017, Ryu.Boyd2016} for a unified analysis of ODER. The contributions of this work are algorithmic, theoretical, and numerical. We propose ODER as a new algorithm. We then develop new theoretical insights into its convergence and ability to approximate the traditional DEQ. In Supplement A, we prove the convergence of forward pass for ODER to $\xbmbar\in\Fix(\Tsf)$ up to an error term controlled by $\gamma$ and $w$. In Supplement B, we prove that the online backward pass in expectation converges to $\bbmbar\in\Fix(\Fsf)$ up to an error term that can be controlled via $w$. In Supplement C, we prove the ability of ODER to approximate the stationary points of the desired loss $\ell(\thetabm)$ up to an error term that can be controlled during training. Finally, in Section~\ref{Sec:TechnicalDetails}, we provide additional technical details on our implementations and simulations omitted from the main paper due to space.

We use the same notations as in the main manuscript. The measurement model corresponds to $\ybm = \Abm\xbmast + \ebm$, where $\xbmast$ is the true solution and $\ebm$ is the noise. The function $g(\xbm)$ denotes the data-fidelity term. The operator $\Dsf_{\thetabm}(\cdot)$ denotes the learned prior within ODER and RED (DEQ), which is implemented via its residual $\Rsf_{\thetabm} \defn \Isf - \Dsf_{\thetabm}$. The operator $\Tsfhat_{\thetabm}(\cdot)$ and $\Fsfhat(\cdot)$  denote the ODER stochastic forward and backward passes, respectively. The operator $\Tsf_{\thetabm}(\cdot)$ and $\Fsf(\cdot)$ denote the full batch forward and backward passes of RED (DEQ), respectively. Finally, our code, including pre-trained CNN models used in ODER and RED (DEQ), is also included in the supplementary material. 

\section{Proof of Theorem 1}

The following theorem shows the convergence (in expectation) of the ODER forward pass for convex $g$ and contractive $\Dsf_\thetabm$. Note that this proof is a variation of existing results in the literature on online PnP/RED~\cite{Sun.etal2018a, Wu.etal2020, Tang.Davies2020, Sun.etal2021}. However, this result plays an important role in the analysis of both the backward pass and the ability of ODER to approximate the traditional DEQ learning.

\label{Sup:Sec:Theorem1}
\begin{theorem}
\label{Thm:ContractOnRED}
Run the forward pass of ODER for $k \geq 1$ iterations under Assumptions~1-4 using the step size $0 < \gamma < 1/(\lambda+\tau)$. Then, the sequence of forward pass iterates satisfies
\begin{equation}
\E\left[\|\xbm^k - \xbmbar\|_2\right] \leq  \eta^k R + \frac{\gamma \nu}{(1-\eta)\sqrt{w}},
\end{equation}
for some constant $0 < \eta < 1$ where $\xbmbar \in \Fix(\Tsf)$.
\end{theorem}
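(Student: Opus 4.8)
The plan is to first establish that the \emph{batch} operator $\Tsf_\thetabm=\Isf-\gamma\Gsf_\thetabm$ is a contraction with some factor $\eta\in(0,1)$, then to view the ODER forward step $\Tsfhat_\thetabm$ as a stochastic perturbation of $\Tsf_\thetabm$ whose size is controlled in expectation by Assumption~\ref{As:BoundedVariance}, and finally to unroll the resulting one-step recursion.

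For the contraction step I would write $\Tsf_\thetabm(\xbm)=(1-\gamma\tau)\Psf(\xbm)+\gamma\tau\Dsf_\thetabm(\xbm)$ with $\Psf\defn\Isf-\tfrac{\gamma}{1-\gamma\tau}\nabla g$, which is a genuine convex combination since $0<\gamma<1/(\lambda+\tau)$ forces $\gamma\tau<1$. Assumption~\ref{As:LipConObj} makes $\nabla g$ monotone with $\lambda$-Lipschitz gradient, hence $\tfrac1\lambda$-cocoercive (Baillon--Haddad), so $\Psf$ is nonexpansive as soon as $\tfrac{\gamma}{1-\gamma\tau}\le\tfrac2\lambda$, i.e. $\gamma(\lambda+2\tau)\le2$; this is implied by $\gamma<1/(\lambda+\tau)$ because $\lambda+2\tau\le2(\lambda+\tau)$. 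Combining nonexpansiveness of $\Psf$ with the $\kappa$-contractivity of $\Dsf_\thetabm$ (Assumption~\ref{As:LipConDen}) gives $\|\Tsf_\thetabm(\xbm)-\Tsf_\thetabm(\ybm)\|_2\le\eta\|\xbm-\ybm\|_2$ with $\eta\defn 1-\gamma\tau(1-\kappa)\in(0,1)$, so in particular $\Tsf_\thetabm$ has a unique, deterministic fixed point $\xbmbar$.

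Next, for each $k\ge1$ I would use $\xbmbar=\Tsf_\thetabm(\xbmbar)$ and the identity $\Tsfhat_\thetabm(\xbm)-\Tsf_\thetabm(\xbm)=-\gamma(\nabla\ghat(\xbm)-\nabla g(\xbm))$ together with the triangle inequality and the contraction just proved to obtain
\begin{equation*}
\|\xbm^k-\xbmbar\|_2\le\gamma\|\nabla\ghat(\xbm^{k-1})-\nabla g(\xbm^{k-1})\|_2+\eta\|\xbm^{k-1}-\xbmbar\|_2.
\end{equation*}
Since the minibatch drawn at step $k$ is independent of $\xbm^{k-1}$, conditioning on $\xbm^{k-1}$ and applying Jensen's inequality with Assumption~\ref{As:BoundedVariance} gives $\E[\|\nabla\ghat(\xbm^{k-1})-\nabla g(\xbm^{k-1})\|_2\mid\xbm^{k-1}]\le\nu/\sqrt{w}$; taking total expectations then yields the recursion $\E[\|\xbm^k-\xbmbar\|_2]\le\eta\,\E[\|\xbm^{k-1}-\xbmbar\|_2]+\gamma\nu/\sqrt{w}$. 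Unrolling this from $\|\xbm^0-\xbmbar\|_2\le R$ (Assumption~\ref{As:BoundedIterates}) and bounding the geometric sum $\sum_{j=0}^{k-1}\eta^j\le1/(1-\eta)$ produces exactly the claimed bound $\E[\|\xbm^k-\xbmbar\|_2]\le\eta^kR+\tfrac{\gamma\nu}{(1-\eta)\sqrt{w}}$.

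I expect the only genuine obstacle to be the contraction step, i.e. checking that the step-size window $0<\gamma<1/(\lambda+\tau)$ --- which does not reference $\kappa$ at all --- is still enough to force $\Psf$ nonexpansive and hence $\eta<1$; once this is in hand the stochastic part is the standard ``contraction plus bounded variance'' argument, where the only care required is to invoke the variance bound conditionally on the past iterate so that the tower property applies cleanly.
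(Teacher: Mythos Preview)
Your proposal is correct and follows the same overall structure as the paper's proof: establish that $\Tsf_\thetabm$ is an $\eta$-contraction with $\eta=1-\gamma\tau(1-\kappa)$, then split $\|\xbm^k-\xbmbar\|_2$ via the triangle inequality into a contraction term plus a stochastic-gradient error, take conditional expectation, apply Jensen to Assumption~\ref{As:BoundedVariance}, and unroll.

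The one point of genuine difference is the contraction step. The paper (its Lemma~1) bounds the Jacobian directly: it writes $\nabla_\xbm\Tsf_\thetabm(\xbm)=(1-\gamma\tau)\Ibm-\gamma\Hsf g(\xbm)+\gamma\tau\nabla_\xbm\Dsf_\thetabm(\xbm)$, uses convexity of $g$ and $\gamma<1/(\lambda+\tau)$ to show the spectrum of $(1-\gamma\tau)\Ibm-\gamma\Hsf g(\xbm)$ lies in $[0,1-\gamma\tau]$, and then adds $\gamma\tau\kappa$ from the denoiser contraction to get the same $\eta=1-\gamma\tau(1-\kappa)$. Your route via the convex-combination decomposition $\Tsf_\thetabm=(1-\gamma\tau)\Psf+\gamma\tau\Dsf_\thetabm$ and Baillon--Haddad cocoercivity is a cleaner, operator-theoretic alternative that avoids the Hessian altogether (even though Assumption~\ref{As:LipConObj} makes it available) and yields the identical constant. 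Both are sound; yours is arguably more elementary, while the paper's Jacobian computation is reused later in the backward-pass analysis.
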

\begin{proof}
For notation connivance, we abbreviate $\Tsf_{\thetabm}(\cdots)$ as $\Tsf_{\thetabm}(\cdots)$ in the following proof. From Lemma~\ref{Sup:Lem:ContractionofU}, $\Tsf$ is a contraction, which means that there exists $0< \eta < 1$ such that
\begin{equation*}
\|\Tsf(\zbm)-\Tsf(\ybm)\|_2 \leq \eta \|\zbm-\ybm\|_2\,,
\end{equation*}
for all $\zbm, \ybm\in\R^n$. Then, for $\xbmbar\in\Fix(\Tsf)$, we have that 
\begin{align*}
\|\xbm^k - \xbmbar\|_2
&= \|\Tsfhat(\xbm^{k-1}) - \Tsf(\xbmbar)\|_2^2 
= \|\Tsf(\xbm^{k-1}) - \Tsf(\xbmbar) + \Tsfhat(\xbm^{k-1}) - \Tsf(\xbm^{k-1})\|_2 \\
&\leq \|\Tsf(\xbm^{k-1}) - \Tsf(\xbmbar)\|_2 + \gamma \|\Tsf(\xbm^{k-1})-\Tsfhat(\xbm^{k-1})\|_2\\
&\leq \eta \|\xbm^{k-1}-\xbmbar\|_2 + \|\nabla g(\xbm^{k-1}) - \nabla \ghat(\xbm^{k-1})\|_2,
\end{align*}
where we used the triangular inequality and that $\Tsf$ is $\eta$-Lipschitz continuous. We take the conditional expectation from both sides to obtain
\begin{equation*}
\E\left[\|\xbm^k-\xbmbar\|_2 \,|\, \xbm^{k-1}\right] \leq \eta \|\xbm^{k-1}-\xbmbar\|_2 + \frac{\gamma\nu}{\sqrt{w}},
\end{equation*}
where we applied the Jensen's inequality to the variance bound in Assumption~4. By taking the total expectation, we thus have
\begin{equation*}\E\left[\|\xbm^k-\xbmbar\|_2\right] \leq \eta \E\left[\|\xbm^{k-1}-\xbmbar\|_2\right] + \frac{\gamma\nu}{\sqrt{w}}.
\end{equation*}
By iterating this inequality and using the bound in Assumption~3, we get the result
\begin{equation*}
\E\left[\|\xbm^k-\xbmbar\|_2\right] \leq \eta^k R + \frac{\gamma \nu}{(1-\eta)\sqrt{w}}.
\end{equation*}
\end{proof}

\subsection{Useful Results for the Proof of Theorem~1}
\label{Sup:Sec:ProofLem1}

\medskip\noindent
The following lemma establishes that $\Tsf$ is a contraction. The proof is a minor modification of the Theorem 1 from~\cite{Gilton.etal2021}, which we provide for completeness. It is worth noting that this result does not assume that the functions $\{g_i\}$ are strongly convex.
\begin{figure}[t]
\centering\includegraphics[width=\textwidth]{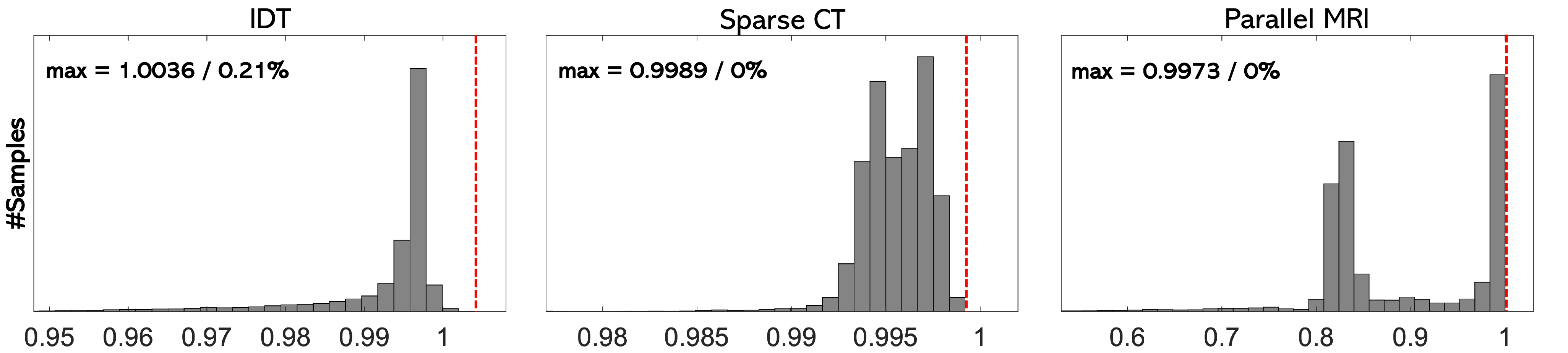}
\caption{~\emph{Empirical evaluation of the Lipschitz continuity of $\,\Tsf$. Each histogram was generated by storing all ODER iterates and $\xbmbar \in \Fix(\Tsf)$ across all the test images used in the tables of the main paper. The x-axis is the value of $\,\|\Tsf(\xbm^{k-1})-\Tsf(\xbmbar)\|_2/\|\xbm^{k-1}-\xbmbar\|_2$.~\textbf{Left:} The histogram of IDT at $b=500$ with \{15, 20, 25\} dB of input SNR. \textbf{Middle:} The histogram of sparse CT at $b\in\{90,120,180\}$ projection views. \textbf{Right:} The histogram of the radially sub-sampled parallel MRI at $10\%$ and $20\%$ sampling. Note how $\Tsf$ numerically acts as a contraction on all the iterates generated for the CT and MRI experiments, and on 99.79\% iterates generated for the IDT experiments. Despite their imperfect numerical precision, current spectral normalization techniques still provide a powerful tool for systematically ensuring stability of PnP/RED fixed-point iterations.}}
\label{Fig:etaEstimate}
\vspace{-.5em}
\end{figure}

\begin{lemma}
\label{Sup:Lem:ContractionofU}
Suppose that Assumptions 1-2 in the main paper are true. Then, for any $0 < \gamma < 1/(\lambda+\tau)$, the operator $\Tsf$ in eq.~(2) of the main paper is a contraction, which means that for all $\xbm \in \R^n$
\begin{equation*}
\|\nabla_\xbm \Tsf(\xbm)\|_2 < 1,
\end{equation*}
where $\| \cdot \|_2$ denotes the spectral norm. 
\end{lemma}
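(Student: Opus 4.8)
The plan is to compute the Jacobian $\nabla_\xbm\Tsf(\xbm)$ in closed form and bound its spectral norm uniformly in $\xbm$. Differentiating eq.~(2) of the main paper gives
\[
\nabla_\xbm\Tsf(\xbm)=\Isf-\gamma\bigl(\Hsf g(\xbm)+\tau(\Isf-\nabla_\xbm\Dsf_\thetabm(\xbm))\bigr)=\bigl((1-\gamma\tau)\Isf-\gamma\,\Hsf g(\xbm)\bigr)+\gamma\tau\,\nabla_\xbm\Dsf_\thetabm(\xbm).
\]
The crucial move is this regrouping: the $-\gamma\tau\Isf$ term is kept together with $\Isf$ rather than with the prior Jacobian, so that the contractivity of $\Dsf_\thetabm$ enters with the right sign. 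I would then bound the two blocks separately using the triangle inequality for the spectral norm.

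For the data-fidelity block, Assumption~1 says each $g_i$ is convex with $\lambda$-Lipschitz gradient, hence $0\preceq\Hsf g_i(\xbm)\preceq\lambda\Isf$; averaging over $i$ preserves both inequalities, so $0\preceq\Hsf g(\xbm)\preceq\lambda\Isf$ for every $\xbm$. Thus $(1-\gamma\tau)\Isf-\gamma\,\Hsf g(\xbm)$ is symmetric with eigenvalues in $[\,1-\gamma(\lambda+\tau),\,1-\gamma\tau\,]$, and since $0<\gamma<1/(\lambda+\tau)$ both endpoints are nonnegative, so this block has spectral norm at most $1-\gamma\tau$. For the prior block, Assumption~2 states $\Dsf_\thetabm$ is a $\kappa$-contraction and is continuously differentiable, which forces $\|\nabla_\xbm\Dsf_\thetabm(\xbm)\|_2\le\kappa$ everywhere (the Jacobian spectral norm of a $\kappa$-Lipschitz map is bounded by $\kappa$). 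Combining,
\[
\|\nabla_\xbm\Tsf(\xbm)\|_2\le(1-\gamma\tau)+\gamma\tau\kappa=1-\gamma\tau(1-\kappa),
\]
so with $\eta\defn1-\gamma\tau(1-\kappa)$ we have $\|\nabla_\xbm\Tsf(\xbm)\|_2\le\eta$; since $\gamma,\tau>0$ and $\kappa<1$ we get $\eta<1$, while $\gamma\tau\le\tau/(\lambda+\tau)<1$ with $1-\kappa\le1$ gives $\eta>0$, so $\eta\in(0,1)$. This already establishes the displayed claim $\|\nabla_\xbm\Tsf(\xbm)\|_2<1$.

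To obtain the Lipschitz-contraction form actually invoked in the proof of Theorem~1, I would finish with a mean-value argument using that $\Tsf$ is $C^1$ (Assumptions~1--2): writing $\Tsf(\zbm)-\Tsf(\ybm)=\int_0^1\nabla_\xbm\Tsf(\ybm+t(\zbm-\ybm))(\zbm-\ybm)\,\mathrm{d}t$ and applying the uniform bound above yields $\|\Tsf(\zbm)-\Tsf(\ybm)\|_2\le\eta\|\zbm-\ybm\|_2$ for all $\zbm,\ybm\in\R^n$. The argument is short, so there is no deep obstacle; the only point needing care is getting the \emph{strict} constant $\eta<1$, which hinges on the regrouping above---bounding $\|\Isf-\gamma\Hsf g(\xbm)\|_2\le1$ and $\gamma\tau\|\Isf-\nabla_\xbm\Dsf_\thetabm(\xbm)\|_2\le\gamma\tau(1+\kappa)$ separately would only give the useless estimate $\eta\le1+\gamma\tau\kappa$. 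This is essentially Theorem~1 of~\cite{Gilton.etal2021}, adapted to the finite-sum $g=\frac1b\sum_ig_i$ and without assuming strong convexity of the $g_i$.
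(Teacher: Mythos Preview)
Your proof is correct and follows essentially the same approach as the paper's: compute the Jacobian, regroup the $-\gamma\tau\Isf$ term with the identity, apply the triangle inequality, bound the symmetric data-fidelity block via the Hessian eigenvalues (using convexity for the lower bound and the step-size restriction for nonnegativity), and bound the prior block via the contractivity of $\Dsf_\thetabm$. Your version is in fact slightly more thorough---you explicitly verify $\eta>0$ and append the mean-value integral to obtain the global Lipschitz form $\|\Tsf(\zbm)-\Tsf(\ybm)\|_2\le\eta\|\zbm-\ybm\|_2$ actually used in Theorem~1, whereas the paper states only the pointwise Jacobian bound in the lemma and uses the Lipschitz form without further comment.
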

\begin{proof}
The Jacobian of the operator $\Tsf$ with respect to $\xbm$ is given by
\begin{equation*}
\nabla_\xbm \Tsf(\xbm) = (1-\gamma\tau) \Ibm - \gamma \Hsf g(\xbm) - \gamma \tau \nabla_\xbm \Dsf(\xbm).
\end{equation*}
Let $\lambda_1 \geq \cdots \geq \lambda_n$ denote sorted eigenvalues of the Hessian matrix $\Hsf g(\xbm)$. Since $g$ is convex, we have that $\lambda_n \geq 0$. Then, for any $\xbm \in \R^n$, we have
\begin{align*}
\|\nabla_\xbm \Tsf(\xbm)\|_2 
& = \|(1-\gamma\tau) \Ibm - \gamma \Hsf g(\xbm) - \gamma \tau \nabla_\xbm \Dsf(\xbm)\|_2\\
&\leq \|(1-\gamma\tau) \Ibm - \gamma \Hsf g(\xbm)\|_2 + \gamma \tau \|\nabla_\xbm \Dsf(\xbm)\|_2\\
&\leq \max_{1 \leq i \leq n} \left\{1-\gamma\tau - \gamma \lambda_i\right\} + \gamma \tau \kappa\\
&\leq 1 - \gamma \tau(1-\kappa) < 1,
\end{align*}
where in the first inequality we used the triangular inequality, in the second the fact that $\Dsf$ is a contraction, and in the third the convexity of $g$.
\end{proof}

\section{Proof of Theorem 2}
\label{Sup:Sec:Theorem 2}

The following result is a novel analysis of the ODER backward pass. The result implies that the backward pass converges (in expectation) up to an error term that can be controlled by the minibatch parameter $w$. Our numerical results provide additional corroboration to our theory by showing that ODER nearly matches the performance of the traditional DEQ learning.

\begin{theorem}
\label{Thm:NeumannOnRED}
Run the backward pass of ODER for $k \geq 1$ iterations under Assumptions~1-4 from $\bbm^0 = \zerobm$ using the step-size $0 < \gamma < 1/(\lambda+\tau)$. Then, the sequence of backward pass iterates satisfies
\begin{equation}
\E\left[\|\bbm^k-\bbmbar\|_2\right] \leq B_1 \eta^k + \frac{B_2}{\sqrt{w}},
\end{equation}
where $0 < \eta < 1$, $B_1 > 0$ and $B_2 > 0$ are constants independent of $k$ and $w$, and $\bbmbar \in \Fix(\Fsf)$.
\end{theorem}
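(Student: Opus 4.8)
The plan is to treat the ODER backward pass as a perturbed version of the idealized affine contraction $\Fsf$ whose unique fixed point is $\bbmbar$, and to control the perturbation using Theorem~\ref{Thm:ContractOnRED} together with Assumptions~1--4. Write $\mathbf{J}\defn\nabla_\xbm\Tsf_\thetabm(\xbmbar)$ and $\widehat{\mathbf{J}}\defn\nabla_\xbm\Tsfhat_\thetabm(\xbm^K)$, so that $\Fsf(\bbm)=\mathbf{J}^\Tsf\bbm+(\xbmbar-\xbmast)$ and $\Fsfhat(\bbm)=\widehat{\mathbf{J}}^\Tsf\bbm+(\xbm^K-\xbmast)$ are both affine. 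The preliminary observation is that the argument proving Lemma~\ref{Sup:Lem:ContractionofU} goes through verbatim for the minibatch objective $\ghat$ (an average of convex, $\lambda$-smooth functions $g_{i_s}$, hence itself convex and $\lambda$-smooth) evaluated at the arbitrary point $\xbm^K$, giving $\|\widehat{\mathbf{J}}\|_2\le\eta\defn 1-\gamma\tau(1-\kappa)<1$ as well as $\|\mathbf{J}\|_2\le\eta$; in particular $\bbmbar=(\Ibm-\mathbf{J}^\Tsf)^{-1}(\xbmbar-\xbmast)$ is well defined with $\|\bbmbar\|_2\le\|\xbmbar-\xbmast\|_2/(1-\eta)$, a fixed constant.

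Next I would set up the error recursion. Subtracting $\bbmbar$ from $\bbm^k=\Fsfhat(\bbm^{k-1})$ and using $\Fsfhat(\bbm^{k-1})-\bbmbar=[\Fsfhat(\bbm^{k-1})-\Fsfhat(\bbmbar)]+[\Fsfhat(\bbmbar)-\bbmbar]$ gives $\bbm^k-\bbmbar=\widehat{\mathbf{J}}^\Tsf(\bbm^{k-1}-\bbmbar)+\bm\delta$, where the residual is $\bm\delta\defn\Fsfhat(\bbmbar)-\bbmbar=\Fsfhat(\bbmbar)-\Fsf(\bbmbar)=(\widehat{\mathbf{J}}-\mathbf{J})^\Tsf\bbmbar+(\xbm^K-\xbmbar)$, using $\Fsf(\bbmbar)=\bbmbar$. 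Taking norms, $\|\bbm^k-\bbmbar\|_2\le\eta\|\bbm^{k-1}-\bbmbar\|_2+\|\bm\delta\|_2$; unrolling from $\bbm^0=\zerobm$ and taking expectations yields $\E\|\bbm^k-\bbmbar\|_2\le\eta^k\|\bbmbar\|_2+(1-\eta)^{-1}\E\|\bm\delta\|_2$ (using that $\bm\delta$ depends only on $\xbm^K$ and the backward minibatch, not on $\bbm^{k-1}$).

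The heart of the proof is then the bound $\E\|\bm\delta\|_2\le\|\bbmbar\|_2\,\E\|\widehat{\mathbf{J}}-\mathbf{J}\|_2+\E\|\xbm^K-\xbmbar\|_2$. Using the Jacobian formula from the proof of Lemma~\ref{Sup:Lem:ContractionofU}, the difference $\widehat{\mathbf{J}}-\mathbf{J}$ is built from $\Hsf\ghat(\xbm^K)-\Hsf g(\xbmbar)$ and $\nabla_\xbm\Dsf_\thetabm(\xbm^K)-\nabla_\xbm\Dsf_\thetabm(\xbmbar)$; I would split the former as $[\Hsf\ghat(\xbm^K)-\Hsf g(\xbm^K)]+[\Hsf g(\xbm^K)-\Hsf g(\xbmbar)]$, bounding the first bracket in expectation by $\nu/\sqrt{w}$ (the pointwise Hessian-variance bound of Assumption~4 applied at $\xbm^K$, with Jensen, after conditioning on the forward-pass randomness), the second by $\lambda\|\xbm^K-\xbmbar\|_2$ (Assumption~1, since $\Hsf g$ is an average of $\lambda$-Lipschitz Hessians), and the denoiser term by $\alpha\|\xbm^K-\xbmbar\|_2$ (Assumption~2). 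This gives $\E\|\widehat{\mathbf{J}}-\mathbf{J}\|_2\le\gamma\nu/\sqrt{w}+\gamma(\lambda+\tau\alpha)\,\E\|\xbm^K-\xbmbar\|_2$. Plugging in Theorem~\ref{Thm:ContractOnRED}, namely $\E\|\xbm^K-\xbmbar\|_2\le\eta^K R+\gamma\nu/((1-\eta)\sqrt{w})$, and substituting back, every term is proportional to $\eta^k$, to $\eta^K$, or to $1/\sqrt{w}$. The forward-pass bias $\eta^K R$ is absorbed into the $\eta^k$ term because the backward pass is run for $k\le K$ iterations, so $\eta^K R\le\eta^k R$ (equivalently, one may take $K$ large enough that $\eta^K\le w^{-1/2}$, as in Theorem~\ref{Thm:MainConvRes}); collecting the resulting coefficients---all functions of $\gamma,\tau,\lambda,\alpha,\nu,\kappa,R$ and $\|\xbmbar-\xbmast\|_2$ only---produces explicit $B_1,B_2$ independent of $k$ and $w$.

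The main obstacle I anticipate is the statistical coupling between the random forward-pass output $\xbm^K$ and the stochastic Hessian $\Hsf\ghat(\xbm^K)$ inside $\widehat{\mathbf{J}}$: one cannot simply invoke unbiasedness to replace $\Hsf\ghat(\xbm^K)$ by $\Hsf g(\xbm^K)$, so the clean route needs the uniform-in-$\xbm$ form of Assumption~4 together with a conditioning argument that separates the forward and backward randomness. The secondary nuisance is the bookkeeping that keeps the transferred forward-pass bias $\eta^K R$ from leaving a non-vanishing additive constant in the final bound, which is precisely what the $k\le K$ regime (or the $\eta^K\le w^{-1/2}$ choice of Theorem~\ref{Thm:MainConvRes}) takes care of.
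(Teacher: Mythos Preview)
Your proof is correct and closely parallels the paper's, but you make one genuinely different choice in the decomposition. The paper inserts and subtracts $\Fsf(\bbm^{k-1})$, using the \emph{idealized} map $\Fsf$ as the contraction and controlling the perturbation $\|\Fsfhat(\bbm^{k-1})-\Fsf(\bbm^{k-1})\|_2$ at the random iterate $\bbm^{k-1}$; this forces them to invoke Assumption~3 to get $\|\bbm^{k-1}\|_2\le 2R$. You instead insert and subtract $\Fsfhat(\bbmbar)$, using the \emph{stochastic} map $\Fsfhat$ as the contraction (valid because Lemma~\ref{Sup:Lem:ContractionofU} applies equally to the convex, $\lambda$-smooth minibatch objective $\ghat$) and controlling the perturbation $\bm\delta=\Fsfhat(\bbmbar)-\Fsf(\bbmbar)$ at the fixed point $\bbmbar$. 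This buys you a cleaner recursion in which the perturbation is independent of the backward iterate, so you do not need the iterate bound on $\bbm^{k-1}$ from Assumption~3 (you only need $\|\bbmbar\|_2$ finite, which you get either from the Neumann bound or from Assumption~3 at $k=0$). The remaining ingredients---splitting $\Hsf\ghat(\xbm^K)-\Hsf g(\xbmbar)$ as in Lemma~\ref{Lem:BoundingRandomHessianVariance}, conditioning on the forward randomness to apply Assumption~4, invoking Theorem~\ref{Thm:ContractOnRED}, and absorbing the forward bias $\eta^K R$ into the $\eta^k$ term via $k\le K$---are identical to the paper's. Both routes yield the same bound; yours is marginally more economical in its use of Assumption~3.
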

\begin{proof}
Let $\xbm^K$ denote the output of the forward pass of ODER after $K \geq 1$ iterations, $\xbmast$ denote the training label, and $\xbmbar \in \Fix(\Tsf)$. Consider the following two operators
\begin{equation*}
\Fsf(\bbm) = [\nabla_\xbm \Tsf(\xbmbar)]^\Tsf\bbm + (\xbmbar-\xbmast) \quad\text{and}\quad\Fsfhat(\bbm) = [\nabla_\xbm \Tsfhat(\xbm^K)]^\Tsf\bbm + (\xbm^K-\xbmast),
\end{equation*}
where the first operator is used in the backward pass of RED (DEQ), while the second is its online approximation. Note also the following two Jacobians
\begin{equation*}
\nabla_\xbm \Tsf(\xbmbar) = \Ibm - \gamma (\Hsf g(\xbmbar) + \tau \nabla_\xbm \Rsf(\xbmbar))\quad\text{and}\quad \nabla_\xbm \Tsfhat(\xbm^K) = \Ibm - \gamma (\Hsf \ghat(\xbm^K) + \tau \nabla_\xbm \Rsf(\xbm^K)).
\end{equation*}
Lemma~\ref{Sup:Lem:ContractionofU} implies that $\Tsf$ is a contraction. Let $0 < \eta < 1$ denote the Lipschitz constant of $\Tsf$. Since $\nabla_\bbm \Fsf(\bbm) = \nabla_\xbm \Tsf(\xbmbar)$, we have $\|\nabla_\bbm \Fsf(\bbm)\|_2 = \|\nabla_\xbm \Tsf(\xbmbar)\|_2 \leq \eta$, which means that $\Fsf$ is a contraction
\begin{equation*}
\|\Fsf(\zbm)-\Fsf(\ybm)\|_2 \leq \eta \|\zbm-\ybm\|_2, \quad \zbm, \ybm \in \R^n.
\end{equation*}
We can thus show the following bound
\begin{align*}
\|\bbm^k-\bbmbar\|_2 &= \|\Fsfhat(\bbm^{k-1})-\Fsf(\bbmbar)\|_2 = \|\Fsf(\bbm^{k-1})-\Fsf(\bbmbar)+\Fsfhat(\bbm^{k-1})-\Fsf(\bbm^{k-1})\|_2 \\
&\leq \|\Fsf(\bbm^{k-1})-\Fsf(\bbmbar)\|_2+\|\Fsfhat(\bbm^{k-1})-\Fsf(\bbm^{k-1})\|_2 \\
&\leq \eta \|\bbm^{k-1}-\bbmbar\|_2 + \|\Fsfhat(\bbm^{k-1})-\Fsf(\bbm^{k-1})\|_2,
\end{align*}
where we first used the triangular inequality and then the fact that $\Fsf$ is a contraction. By taking the conditional expectation on both sides, we obtain
\begin{equation}
\label{Eq:Thm2:FhatF}
\E\left[\|\bbm^k-\bbmbar\|_2 \,|\, \xbm^K, \bbm^{k-1}\right] \leq \eta \|\bbm^{k-1}-\bbmbar\|_2 + \E\left[\|\Fsfhat(\bbm^{k-1})-\Fsf(\bbm^{k-1})\|_2 \,|\, \xbm^K, \bbm^{k-1} \right].
\end{equation}
We can bound the second term in~\eqref{Eq:Thm2:FhatF} as follows
\begin{align*}
&\E\left[\|\Fsfhat(\bbm^{k-1})-\Fsf(\bbm^{k-1})\|_2 \,|\, \xbm^K, \bbm^{k-1} \right] \leq \gamma \E\left[\|\Hsf g(\xbmbar)-\Hsf \ghat(\xbm^K)\|_2 \,|\, \xbm^K, \bbm^{k-1} \right] \|\bbm^{k-1}\|_2 \\
&+ \gamma \tau \|\nabla_\xbm \Rsf(\xbmbar)-\nabla_\xbm \Rsf(\xbm^K)\|_2 \|\bbm^{k-1}\|_2 + \|\xbm^K-\xbmbar\|_2 \\
&\leq \gamma \lambda \|\xbm^K-\xbmbar\|_2\|\bbm^{k-1}\|_2 + \frac{\gamma \nu}{\sqrt{w}}\|\bbm^{k-1}\|_2 + \gamma\tau\alpha \|\xbm^K-\xbmbar\|_2\|\bbm^{k-1}\|_2 + \|\xbm^K-\xbmbar\|_2 \\
&\leq A_1 \|\xbm^K-\xbmbar\|_2 + \frac{A_2}{\sqrt{w}},
\end{align*}
with $A_1 \defn (2 \gamma \lambda  + 2 \gamma\tau\alpha R + 1)$ and $A_2 = 2\nu R$, where in the second inequality we used Lemma~\ref{Lem:BoundingRandomHessianVariance} and the $\alpha$-Lipschitz continuity of $\nabla_\xbm \Rsf$ and in the third $\|\bbm^{k-1}\|_2 \leq 2R$. Since $\bbm^0 = \zerobm$, Assumption~3 implies that $\|\bbmbar\|_2 \leq R$, which leads to $\|\bbm^{k-1}\|_2 \leq 2R$ for all $k \geq 1$.

\noindent
By including the last bound into~\eqref{Eq:Thm2:FhatF}, we obtain
\begin{equation*}
\E\left[\|\bbm^k-\bbmbar\|_2 \,|\, \xbm^K, \bbm^{k-1}\right] \leq \eta \|\bbm^{k-1}-\bbmbar\|_2 + A_1 \|\xbm^K-\xbmbar\|_2 + A_2/\sqrt{w}.
\end{equation*}
By taking the total expectation and using Theorem~\ref{Thm:ContractOnRED}, we get
\begin{align*}
\E\left[\|\bbm^k-\bbmbar\|_2\right] 
\leq \eta \E\left[\|\bbm^{k-1}-\bbmbar\|_2\right] + A_1 R\eta^K + \frac{A_1\gamma \nu}{(1-\nu)\sqrt{w}} + \frac{A_2}{\sqrt{w}}.
\end{align*}
By iterating this bound and noting that $k \leq K$, we get the final result
\begin{equation*}
\E\left[\|\bbm^k-\bbmbar\|_2\right]  \leq \eta^k B_1 + \frac{B_2}{\sqrt{w}}\,,
\end{equation*}
where $B_1 \defn R + A_1R/(1-\eta)$ and $B_2 \defn (((A_1\gamma \nu)/(1-\eta)) + A_2)/(1-\nu) $.

\end{proof}

\subsection{Technical Lemma for the Proof of Theorem~2}
\label{Sup:Sec:ProofLem2}

The following technical result is used in the proof of Theorem 2. It bounds the variance of the Hessian of the data-fidelity term $g$.

\begin{lemma}
\label{Lem:BoundingRandomHessianVariance}
Suppose that Assumptions 1 and 4 in the main paper are true. Then, for any $\zbm, \ybm \in \R^n$
\begin{equation*}
\E\left[\|\Hsf g(\zbm) - \Hsf \ghat(\ybm)\|_2\right] \leq \lambda \|\zbm-\ybm\|_2 + \frac{\nu}{\sqrt{w}},
\end{equation*}
where the expectation is taken over the indices $\{i_1, \dots, i_w\}$ used for $\ghat$.
\end{lemma}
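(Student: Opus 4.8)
The plan is to bound the quantity by splitting it into a ``Lipschitz part'' and a ``variance part'' via the triangle inequality, inserting the intermediate term $\Hsf g(\ybm)$:
\begin{equation*}
\|\Hsf g(\zbm) - \Hsf \ghat(\ybm)\|_2 \;\leq\; \|\Hsf g(\zbm) - \Hsf g(\ybm)\|_2 \;+\; \|\Hsf g(\ybm) - \Hsf \ghat(\ybm)\|_2 .
\end{equation*}
The first term is deterministic and the second is random (through the indices $\{i_1,\dots,i_w\}$ defining $\ghat$), so taking expectations leaves the first term untouched.

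For the first term I would use Assumption~1: since each $\Hsf g_i$ is $\lambda$-Lipschitz continuous and $\Hsf g = \tfrac{1}{b}\sum_{i=1}^b \Hsf g_i$, the average $\Hsf g$ is itself $\lambda$-Lipschitz continuous (by the triangle inequality applied to the convex combination), hence $\|\Hsf g(\zbm) - \Hsf g(\ybm)\|_2 \leq \lambda\|\zbm-\ybm\|_2$. For the second term I would apply Jensen's inequality to pass the expectation inside the square root, and then invoke the Hessian variance bound in Assumption~4:
\begin{equation*}
\E\!\left[\|\Hsf g(\ybm) - \Hsf \ghat(\ybm)\|_2\right] \;\leq\; \sqrt{\E\!\left[\|\Hsf g(\ybm) - \Hsf \ghat(\ybm)\|_2^2\right]} \;\leq\; \sqrt{\frac{\nu^2}{w}} \;=\; \frac{\nu}{\sqrt{w}} .
\end{equation*}
Adding the two bounds gives the claimed inequality.

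There is no real obstacle here: the only point requiring a line of justification is that the $\lambda$-Lipschitz property transfers from the summands $\Hsf g_i$ to their average $\Hsf g$, which is immediate. The lemma is purely a bookkeeping step isolating the two sources of error (evaluating the Hessian at a different point $\ybm$ versus $\zbm$, and replacing the full Hessian by its minibatch estimate) so that it can be plugged into the bound on $\|\Fsfhat(\bbm^{k-1})-\Fsf(\bbm^{k-1})\|_2$ in the proof of Theorem~2.
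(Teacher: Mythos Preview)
Your proposal is correct and matches the paper's own proof essentially line for line: insert $\Hsf g(\ybm)$, apply the triangle inequality, bound the first term by the $\lambda$-Lipschitz continuity of $\Hsf g$, and bound the second by Jensen's inequality together with the variance bound in Assumption~4. The only addition you make is the explicit remark that the $\lambda$-Lipschitz property passes from the $\Hsf g_i$ to their average, which the paper leaves implicit.
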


\begin{proof}
The proof directly follows the $\lambda$-Lipschitz continuity assumption of $\Hsf g$ is Assumption 1 and boundedness of the variance in Assumption 4
\begin{align*}
\E\left[\|\Hsf g(\zbm) - \Hsf \ghat(\ybm)\|_2\right] 
&\leq \E\left[\|\Hsf g(\zbm) - \Hsf g(\ybm)\|_2\right] + \E\left[\|\Hsf g(\ybm) - \Hsf \ghat(\ybm)\|_2\right] \\
&\leq \lambda \|\zbm-\ybm\|_2 + \frac{\nu}{\sqrt{w}},
\end{align*}
where we used the Jensen's inequality to get the second term.
\end{proof}

\section{Proof of Theorem 3}
\label{Sup:Sec:Theorem3}

Our final theoretical result is a novel analysis on the ability of ODER to approximate the stationary points of the loss $\ell(\thetabm)$. We show that ODER can approximate (in expectation) the stationary points up to an error term that can be controlled by the minibatch size $w$ and the learning rate $\beta$.

\begin{theorem}
\label{Thm:MainConvRes}
Train ODER using SGD for $T \geq 1$ iterations under Assumptions~1-6 using the step-size parameters $0 < \beta \leq 1/L$ and the minibatch size $w \geq 1$. Select a large enough number of forward and backward pass iterations $K \geq 1$ to satisfy $0<\eta^K\leq1/{\sqrt{w}}$. Then, we have that
\begin{equation*}
\label{EqTheorem1}
\frac{1}{T}\sum_{t=0}^{T-1}\E\left[\|\nabla\ell(\thetabm^t)\|_2^2\right]\leq\frac{2(\ell(\thetabm^0)-\ell(\thetabm^{\ast}))}{\beta T} + \frac{C_1}{\sqrt{w}} + \beta C_2\,.
\end{equation*}
where $C_1>0$ and $C_2>0$ are constants independent of $T$ and $w$.
\end{theorem}
\begin{proof}
\noindent
Consider the RED (DEQ) loss $\ell$ and its ODER approximation $\ellhat$
\begin{equation}
\label{Eq:UnbiasedAverages}
\ell(\thetabm) = \frac{1}{p}\sum_{j=1}^{p}\ell_j(\thetabm)\quad\text{and}\quad \ellhat(\thetabm) = \frac{1}{p}\sum_{j=1}^{p}\ellhat_j(\thetabm)\,,
\end{equation}
where each $\ell_j$ and $\ellhat_j$ have the forms
\begin{equation*}
\ell_j(\thetabm)=\frac{1}{2}\|\xbmbar_j(\thetabm)-\xbm_j^{\ast}\|_2^2\quad\text{and}\quad\ellhat_j(\thetabm)=\frac{1}{2}\|\xbm^K_j(\thetabm)-\xbm_j^{\ast}\|_2^2.
\end{equation*}
Vector $\xbm_j^K$ denotes the final iterate of the online forward pass obtained after $K \geq 1$ iterations for the training sample $j\in\{1,\cdots,p\}$. 

\medskip\noindent
From Assumption~5, we obtain the traditional Lipschitz continuity bound on the gradient
\begin{equation*}
\|\nabla \ell(\thetabm_1) -\nabla \ell(\thetabm_2) \|_2 \leq L\|\thetabm_1 - \thetabm_2\|_2,
\end{equation*}
which directly leads to  traditional quadratic upper bound (see Lemma 1.2.3 in~\cite{Nesterov2004}).
\begin{align}
\label{Eq:QuadBound}
\ell(\thetabm_1) \leq \ell(\thetabm_2) &+ \nabla \ell(\thetabm_2)^{\Tsf}(\thetabm_1 - \thetabm_2) + \frac{ L}{2}\|\thetabm_1-\thetabm_2\|_2^2.
\end{align}
Lemma~\ref{Sup:Lem:Boundedlosshat} in this supplement establishes the following useful bound for our analysis
\begin{equation*}
\E\left[\|\nabla\ellhat(\thetabm^t)-\nabla\ell(\thetabm^t)\|_2^2\right]\leq\frac{C}{\sqrt{w}},
\end{equation*}
for some constant $C > 0$ (see its full expression in Lemma~\ref{Sup:Lem:Boundedlosshat}). This directly implies that
\begin{align}
\label{Eq:LosshatBound}
\E\left[-\nabla\ell(\thetabm^t)^{\Tsf}\nabla\ellhat(\thetabm^t) + \frac{1}{2}\|\nabla\ellhat(\thetabm^t)\|_2^2\right]\leq-\frac{1}{2}\E[\|\nabla\ell(\thetabm^t)\|_2^2] + \frac{C}{2\sqrt{w}}\,.
\end{align}

\noindent
Consider a single iteration of SGD for optimizing ODER
\begin{equation}
\thetabm^{t+1}=\thetabm^t - \beta \nabla\ellhat_{j_t}(\thetabm^t).
\end{equation}
From the quadratic upper bound~\eqref{Eq:QuadBound}, we get
\begin{equation}
\begin{aligned}
\label{Eq:quadupperBound}
\nonumber\ell(\thetabm^{t+1})  - \ell(\thetabm^{t})
\nonumber&\leq\nabla \ell(\thetabm^t)^{\Tsf}(\thetabm^{t+1} - \thetabm^t) + \frac{L}{2}\|\thetabm^{t+1}-\thetabm^t\|_2^2\\
\nonumber&= -\beta\nabla \ell(\thetabm^t)^{\Tsf} \nabla \hat\ell_{j_t}(\thetabm^t)+ \frac{\beta^2 L}{2}\|\nabla\hat\ell_{j_t}(\thetabm^t)\|_2^2\,.
\end{aligned}
\end{equation}
For notation convenience, we use $\E[\cdot|\setminus j_t]$ to denote the expectation only with respect to the training index $j_t \in\{1,\dots,p\}$, where we condition on $\thetabm^t$ and the random indices within forward and backward passes at this iteration. By taking $\E[\cdot|\setminus j_t]$ on both sides of the quadratic upper bound~\eqref{Eq:QuadBound}
\begin{align}
\label{Eq: conditionBound}
\E\left[\ell(\thetabm^{t+1})|\setminus j_t \right] -\ell(\thetabm^t)
\nonumber&\leq-\beta\nabla \ell(\thetabm^t)^{\Tsf} \E\left[\nabla\hat\ell_{j_t}(\thetabm^t)|\setminus j_t\right] +  \frac{\beta^2 L}{2}\E\left[\|\nabla\ellhat_{j_t}(\thetabm^t)\|_2^2|\setminus j_t\right]\\
&=-\beta\nabla \ell(\thetabm^t)^{\Tsf}\nabla\hat\ell(\thetabm^t) +  \frac{\beta^2 L}{2}\E\left[\|\nabla\ellhat_{j_t}(\thetabm^t)\|_2^2|\setminus j_t\right]\,,
\end{align}
where we use~\eqref{Eq:UnbiasedAverages}  and the fact that $j_t$ is distributed uniformly at random in $\{1, \dots, p\}$.

\noindent
We now estimate the last term in~\eqref{Eq: conditionBound}
\begin{equation}
\begin{aligned}
\label{Eq:sto2stoclossBound}
&\E\left[\|\nabla\ellhat_{j_t}(\thetabm^t)\|_2^2|\setminus j_t\right]\\
&=\E\left[\|\nabla\ellhat_{j_t}(\thetabm^t)-\nabla\ellhat(\thetabm^t) + \nabla\ellhat(\thetabm^t)\|_2^2|\setminus j_t\right]\\
&=\E\left[\|\nabla\ellhat_{j_t}(\thetabm^t)-\nabla\ellhat(\thetabm^t)\|_2^2 + 2(\nabla\ellhat_{j_t}(\thetabm^t)-\nabla\ellhat(\thetabm^t))^{\Tsf}\nabla\ellhat(\thetabm^t) +\|\nabla\ellhat(\thetabm^t)\|_2^2 | \setminus j_t\right]\\
&=\E\left[\|\nabla\ellhat_{j_t}(\thetabm^t)-\nabla\ellhat(\thetabm^t)\|_2^2 | \setminus j_t\right] + \E\left[\|\nabla\ellhat(\thetabm^t)\|_2^2| \setminus j_t\right]\,,
\end{aligned}
\end{equation}
where in the third equality we use the fact that $\E\left[(\nabla\ellhat_{j_t}(\thetabm^t)-\nabla\ellhat(\thetabm^t))^{\Tsf}\nabla\ellhat(\thetabm^t)|\setminus j_t\right]=0$.%

\noindent
By replacing the last term in~\eqref{Eq: conditionBound} with~\eqref{Eq:sto2stoclossBound} and taking the full expectation on both sides of~\eqref{Eq: conditionBound} in terms of all random variables and invoking the bound~\eqref{Eq:LosshatBound}, we obtain
\begin{equation*}
\begin{aligned}
&\E[\ell(\thetabm^{t+1})] - \E\left[\ell(\thetabm^t)\right]\\
&\leq\E\left[-\beta\nabla \ell(\thetabm^t)^{\Tsf}\nabla\hat\ell(\thetabm^t) + \frac{\beta^2L}{2}\|\nabla\ellhat(\thetabm^t)\|\right] + \frac{\beta^2L}{2}\E\left[\|\nabla\ellhat_{j_t}(\thetabm^t)-\nabla\ellhat(\thetabm^t)\|_2^2\right]\\
&\leq -\frac{\beta}{2}\E[\|\nabla\ell(\thetabm^t)\|_2^2] + \frac{\beta C}{2\sqrt{w}} + \frac{\beta^2 L}{2}(4\sigma^2+\frac{6C}{\sqrt{w}})\,,
\end{aligned}
\end{equation*}

\noindent
where we used the fact that $0<\beta\leq 1/L$, and in the last inequality we used the bound from Lemma~\ref{Sup:Lem:Boundedlosshat2} in this document. By rearranging the terms, and summing this bound over $0\leq t \leq T-1$, we obtain
\begin{equation*}
\begin{aligned}
\frac{1}{T}\sum_{t=0}^{T-1}\E\left[\|\nabla\ell(\thetabm^t)\|_2^2\right]
&\leq\frac{2(\ell(\thetabm^0)-\E[\ell(\thetabm^{T})])}{\beta T} + \frac{C_1}{\sqrt{w}} + \beta C_2\\
&\leq\frac{2(\ell(\thetabm^0)-\ell(\thetabm^{\ast}))}{\beta T} + \frac{C_1}{\sqrt{w}} + \beta C_2\,,
\end{aligned}
\end{equation*}
where $C_1\defn7\gamma^2\tau^2\alpha^2(R+R^2)(B_1+B_2+R^2 +  \gamma\nu R/(1-\eta))$ and $C_2\defn4L\sigma^2$ are constants independent of $t$ and $w$. In the last inequality we used the fact that $\ell(\thetabm^{\ast})\leq\ell(\thetabm)$ for all $\thetabm$, where $\thetabm^{\ast}$ denotes a global minimizer of $\ell$.
\end{proof}

\subsection{Technical Lemmas for the Proof of Theorem~3}
\label{Sup:Sec:ProofLem1-2}
The following lemma are useful for relating $\nabla\ellhat$ and $\nabla\ell$ in expectation up to an error term. Both are used in the proof of Theorem~\ref{Thm:MainConvRes}.
\begin{lemma}
\label{Sup:Lem:Boundedlosshat}
Given the loss function $\ell$ of RED (DEQ) and $\hat\ell$ of ODER, by selecting a large enough number of forward and backward iterations $K\geq1$ to satisfy $0<\eta^K\leq1/{\sqrt{w}}$, we have 
\begin{equation*}
\E\left[\|\nabla\ellhat(\thetabm)-\nabla\ell(\thetabm)\|_2^2\right]\leq \frac{C}{\sqrt{w}}, 
\end{equation*}
where $C\defn\gamma^2\tau^2\alpha^2(R+R^2)(B_1+B_2+R^2 +  \gamma\nu R/(1-\eta))$ is a constant.
\end{lemma}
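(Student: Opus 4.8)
The plan is to reduce the claim to a single training sample and then exploit the fact that the $\thetabm$-Jacobians of $\Tsf_{\thetabm}$ and $\Tsfhat_{\thetabm}$ coincide. First I would use the convexity of $\|\cdot\|_2^2$ together with the averaged forms $\ell=\frac1p\sum_j\ell_j$ and $\ellhat=\frac1p\sum_j\ellhat_j$ to write $\|\nabla\ellhat(\thetabm)-\nabla\ell(\thetabm)\|_2^2\le\frac1p\sum_{j=1}^p\|\nabla\ellhat_j(\thetabm)-\nabla\ell_j(\thetabm)\|_2^2$, so it suffices to bound $\E[\|\nabla\ellhat_j-\nabla\ell_j\|_2^2]$ uniformly in $j$. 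By implicit differentiation (the DEQ backward pass), the RED (DEQ) gradient is $\nabla\ell_j(\thetabm)=[\nabla_\thetabm\Tsf_{\thetabm}(\xbmbar_j)]^{\Tsf}\bbmbar_j$ with $\bbmbar_j\in\Fix(\Fsf)$, while the ODER gradient is $\nabla\ellhat_j(\thetabm)=[\nabla_\thetabm\Tsfhat_{\thetabm}(\xbm_j^K)]^{\Tsf}\bbm_j^K$.

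The key observation is that neither $g$ nor $\ghat$ depends on $\thetabm$, so $\nabla_\thetabm\Tsf_{\thetabm}(\xbm)=\nabla_\thetabm\Tsfhat_{\thetabm}(\xbm)=\gamma\tau\,\nabla_\thetabm\Dsf_{\thetabm}(\xbm)$ for every $\xbm$; the two gradients differ only through the points at which this common Jacobian is evaluated and through the backward iterates. Adding and subtracting a cross term gives
\begin{equation*}
\nabla\ell_j-\nabla\ellhat_j=\gamma\tau\Big([\nabla_\thetabm\Dsf_{\thetabm}(\xbmbar_j)-\nabla_\thetabm\Dsf_{\thetabm}(\xbm_j^K)]^{\Tsf}\bbmbar_j+[\nabla_\thetabm\Dsf_{\thetabm}(\xbm_j^K)]^{\Tsf}(\bbmbar_j-\bbm_j^K)\Big).
\end{equation*}
Then the triangle inequality together with Assumption~2 — which gives the $\alpha$-Lipschitz continuity of $\nabla_\thetabm\Dsf_{\thetabm}$ in $\xbm$ and also the uniform bound $\|\nabla_\thetabm\Dsf_{\thetabm}(\xbm)\|_2\le\alpha$ (a map that is $\alpha$-Lipschitz in $\thetabm$ has Jacobian norm at most $\alpha$) — and the bound $\|\bbmbar_j\|_2\le R$ (from $\bbm^0=\zerobm$ and Assumption~3) yields $\|\nabla\ell_j-\nabla\ellhat_j\|_2\le\gamma\tau\alpha\big(R\,\|\xbmbar_j-\xbm_j^K\|_2+\|\bbmbar_j-\bbm_j^K\|_2\big)$.

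Finally I would square this estimate, use Assumption~3 once more to turn each squared distance into a distance times $R$, take the total expectation, and invoke Theorem~1 for $\E[\|\xbm_j^K-\xbmbar_j\|_2]\le\eta^KR+\gamma\nu/((1-\eta)\sqrt w)$ and Theorem~2 for $\E[\|\bbm_j^K-\bbmbar_j\|_2]\le B_1\eta^K+B_2/\sqrt w$. Because $K$ is chosen so that $\eta^K\le1/\sqrt w$, every geometric term collapses into a $1/\sqrt w$ term, and collecting the resulting constants produces the stated $C$. The main obstacle is the bookkeeping of nested randomness and rates: one must make sure that squaring the two-term bound and then substituting the first-moment estimates of Theorems~1 and~2 (whose expectations already range over the forward- and backward-pass minibatch indices) introduces no spurious dependence on $w$, and that the geometric factors $\eta^K$ from both passes genuinely reduce to the single $1/\sqrt w$ rate; getting the precise constant $C=\gamma^2\tau^2\alpha^2(R+R^2)(B_1+B_2+R^2+\gamma\nu R/(1-\eta))$ is then just a matter of careful accounting of the powers of $R$ produced in this conversion.
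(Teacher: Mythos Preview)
Your proposal is correct and follows essentially the same route as the paper: the same reduction to a single sample via convexity, the same add-and-subtract decomposition of $[\nabla_\thetabm\Tsf_\thetabm]^{\Tsf}\bbmbar_j-[\nabla_\thetabm\Tsfhat_\thetabm]^{\Tsf}\bbm_j^K$, the same use of the $\alpha$-Lipschitz bounds on $\Dsf_\thetabm$ and $\nabla_\thetabm\Dsf_\thetabm$, and the same appeal to Theorems~1 and~2 together with $\eta^K\le 1/\sqrt{w}$. The only cosmetic difference is that the paper packages the passage from the first to the second moment as the single inequality $\E[X^2]\le c\,\E[X]$ for $0\le X\le c$ (with $c=\gamma\tau\alpha(R+R^2)$), which yields the stated constant $C$ directly, whereas your ``square and convert each squared distance into a distance times $R$'' is the same idea applied termwise.
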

\begin{proof}
By using the DEQ training, we have the following bound
\begin{align}
\label{Eq:stcerrorlossBound}
\|\nabla\ellhat_{j}(\thetabm) - \nabla\ell_{j}(\thetabm)\|_2
\nonumber&= \left\|\left[\nabla_\thetabm \Tsfhat_{\thetabm}({\xbm_{j}^K})\right]^{\Tsf}\bbm_{j}^K - \left[\nabla_\thetabm \Tsf_{\thetabm}({\xbmbar_{j}})\right]^{\Tsf}\bbmbar_{j}\right\|_2\\
\nonumber&= \left\|\left[\nabla_\thetabm \Tsfhat_{\thetabm}({\xbm_{j}^K})\right]^{\Tsf}(\bbm_j^K -\bbmbar_j) + \left[\nabla_\thetabm \Tsfhat_{\thetabm}({\xbm_{j}^K}) -  \nabla_\thetabm \Tsf_{\thetabm}({\xbmbar_{j}})\right]^{\Tsf}\bbmbar_j\right\|_2\\
\nonumber&\leq\left\|\nabla_\thetabm \Tsfhat_{\thetabm}({\xbm_{j}^K})\right\|_2 \|\bbm_j^K - \bbmbar_j\|_2 + \left\|\nabla_\thetabm \Tsfhat_{\thetabm}({\xbm_{j}^K}) -  \nabla_\thetabm \Tsf_{\thetabm}({\xbmbar_{j}}) \right\|_2\|\bbmbar_j\|_2\\
&\leq \gamma\tau\alpha(\|\bbm_j^K - \bbmbar_j\|_2 + \|\xbm_j^K - \xbmbar_j\|_2\|\bbmbar_j\|_2),\quad\forall j\in\{1,\cdots,p\}\,,
\end{align}
where in the first inequality, we used Cauchy-Schwarz inequality and the fact that  $\Dsf_{\thetabm}$ and $\nabla_{\thetabm}\Dsf_{\thetabm}(\xbm)$ are $\alpha$-Lipschitz continuous with respect to $\xbm$ and $\thetabm$ based on Assumption~2. By applying Assumption 3 which states that $\|\bbm^k-\bbmbar\|_2<R$ and $\|\xbm^k-\xbmbar\|_2<R$ for every $\xbm, \bbm\in\R^n$, we have 
\begin{equation}
\label{Lem:GlobalBoundStochGrad}
\|\nabla\ellhat_{j}(\thetabm) - \nabla\ell_{j}(\thetabm)\|_2\leq\gamma\tau\alpha(R+R^2),\quad\forall j\in\{1,\cdots,p\}
\end{equation}
\noindent
On the other hand, by taking the expectation with respect to the stochastic approximation variables in forward and backward propagation in~\eqref{Eq:stcerrorlossBound}, invoking the bounds obtained from Theorem~1 and Theorem~2 and using the fact that $0<\eta^K\leq1/\sqrt{w}$, we have
\begin{equation*}
\E\left[\|\nabla\ellhat_{j}(\thetabm) - \nabla\ell_{j}(\thetabm)\|_2\right]\leq \gamma\tau\alpha\left(\frac{B_1+B_2+R^2}{\sqrt{w}}+\frac{\gamma\nu R}{(1-\eta)\sqrt{w}}\right)\,
\end{equation*}
where $B_1>0$ and $B_2>0$ are constants obtained in Theorem~\ref{Thm:NeumannOnRED} .
\noindent
Now consider a random variable $X$ which has a probability density function given by  $f(x)$ on the real number line such that $P(0\leq X\leq c) =1$, then we have
\begin{equation*}
\E[X] = \int_0^c x^2f(x)dx\leq\int_0^c cxf(x)dx=c\E[X].
\end{equation*}
As a consequence, given the bound~\eqref{Lem:GlobalBoundStochGrad} for the random variables $\|\nabla\ellhat_{j}(\thetabm) - \nabla\ell_{j}(\thetabm)\|_2$ and using the above fact, we have the following useful bound for our proof
\begin{equation}
\label{Eq:stostohat2sto}
\E\left[\|\nabla\ellhat_{j}(\thetabm) - \nabla\ell_{j}(\thetabm)\|_2^2\right]\leq\frac{C}{\sqrt{w}}\,,
\end{equation}
where $C\defn\gamma^2\tau^2\alpha^2(R+R^2)(B_1+B_2+R^2 +  \gamma\nu R/(1-\eta))$ is a constant.

\noindent
Note also the fact that for $\abm_1,\cdots,\abm_p\in\R^n$, we have
\begin{equation*}
\left\|\sum_{j=1}^{p}\abm_j\right\|_2^2\leq p\sum_{j=1}^{p}\|\abm_j\|_2^2.
\end{equation*}
As a result, applying the bound~\eqref{Eq:stostohat2sto}, we have
\begin{equation*}
\begin{aligned}
\E\left[\|\nabla\ellhat(\thetabm) - \nabla\ell(\thetabm)\|_2^2\right]
&=\frac{1}{p^2}\E\left[\left\|\sum_{j=1}^p(\nabla\ellhat_j(\thetabm) - \nabla\ell_j(\thetabm))\right\|_2^2\right]\\
&\leq\frac{1}{p}\sum_{j=1}^p\E\left[\left\|\nabla\ellhat_j(\thetabm) - \nabla\ell_j(\thetabm)\right\|_2^2\right]\\
&\leq\frac{C}{\sqrt{w}}\,.
\end{aligned}
\end{equation*}
This finishes the proof.
\end{proof}
\medskip\noindent
\begin{lemma}
\label{Sup:Lem:Boundedlosshat2}
Given the loss function $\ell$ and $\hat\ell$, by taking the conditional expectation with respect to the training index $j_t$ (via conditioning on all other variables), we have 
\begin{equation*}
\E\left[\|\nabla\ellhat_{j_t}(\thetabm^t)-\nabla\ellhat(\thetabm^t)\|_2^2\right]\leq 4\sigma^2+\frac{6C}{\sqrt{w}}, \end{equation*}
where $C>0$ obtained in~Lemma~\ref{Sup:Lem:Boundedlosshat} is a constant.
\end{lemma}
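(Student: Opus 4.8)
The plan is to control the conditional variance of the stochastic training gradient $\nabla\ellhat_{j_t}(\thetabm^t)$ around its conditional mean $\nabla\ellhat(\thetabm^t)$ by decomposing it into three pieces whose second moments have already been bounded. Concretely, I would write
\begin{align*}
\nabla\ellhat_{j_t}(\thetabm^t)-\nabla\ellhat(\thetabm^t)
&= \big(\nabla\ellhat_{j_t}(\thetabm^t)-\nabla\ell_{j_t}(\thetabm^t)\big)
+ \big(\nabla\ell_{j_t}(\thetabm^t)-\nabla\ell(\thetabm^t)\big)\\
&\quad + \big(\nabla\ell(\thetabm^t)-\nabla\ellhat(\thetabm^t)\big),
\end{align*}
and then apply the elementary inequality $\|\abm+\bbm+\cbm\|_2^2\le 3\big(\|\abm\|_2^2+\|\bbm\|_2^2+\|\cbm\|_2^2\big)$.

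Next I would bound the expectation of each of the three squared terms, where the expectation is taken jointly over the training index $j_t\in\{1,\dots,p\}$ and the minibatch indices used in the forward and backward passes at iteration $t$. The first term is precisely the per-sample DEQ-gradient mismatch that appears inside the proof of Lemma~\ref{Sup:Lem:Boundedlosshat}: the intermediate bound~\eqref{Eq:stostohat2sto} gives $\E\!\left[\|\nabla\ellhat_{j}(\thetabm^t)-\nabla\ell_{j}(\thetabm^t)\|_2^2\right]\le C/\sqrt{w}$ for every fixed $j$, hence also after averaging over the uniform choice of $j_t$. The third term is controlled directly by the conclusion of Lemma~\ref{Sup:Lem:Boundedlosshat}, namely $\E\!\left[\|\nabla\ellhat(\thetabm^t)-\nabla\ell(\thetabm^t)\|_2^2\right]\le C/\sqrt{w}$. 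The middle term is the classical minibatch variance of SGD: Assumption~\ref{As:UnbiasedAssumptionF} yields $\E\!\left[\|\nabla\ell_{j_t}(\thetabm^t)-\nabla\ell(\thetabm^t)\|_2^2\right]\le\sigma^2$. Adding the three contributions gives the bound $3\sigma^2+6C/\sqrt{w}$, and since $3\sigma^2\le 4\sigma^2$ this is at most $4\sigma^2+6C/\sqrt{w}$, which is the claimed inequality.

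I do not expect a genuine difficulty here; the lemma is essentially bookkeeping on top of the two preceding lemmas. The one point requiring care is keeping track of which random variables each expectation is over: the bound~\eqref{Eq:stostohat2sto} is over the forward/backward minibatch indices for a fixed sample $j$, so to use it for the first term I must average over $j_t$ and then over the pass-level randomness (or invoke the tower property), and the ``conditional expectation with respect to $j_t$'' in the statement should ultimately be read together with the total expectation over all stochastic quantities of iteration $t$. A slightly sharper alternative, which also works, is to use $\E\|X-\E X\|_2^2\le\E\|X-\cbm\|_2^2$ with the deterministic choice $\cbm=\nabla\ell(\thetabm^t)$, reducing the argument to a two-term split and the tighter bound $2\sigma^2+2C/\sqrt{w}$; either route establishes the stated estimate.
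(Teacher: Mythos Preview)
Your proposal is correct and uses essentially the same decomposition as the paper: the paper also inserts $\nabla\ell_{j_t}(\thetabm^t)$ and $\nabla\ell(\thetabm^t)$ and invokes the per-sample bound~\eqref{Eq:stostohat2sto}, Lemma~\ref{Sup:Lem:Boundedlosshat}, and Assumption~\ref{As:UnbiasedAssumptionF}, though it organizes the computation as two nested two-term splits (yielding exactly $4\sigma^2+6C/\sqrt{w}$) rather than your single three-term split (yielding the slightly sharper $3\sigma^2+6C/\sqrt{w}$). Your care about the tower property over the two layers of randomness is well placed and matches what the paper does implicitly.
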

\begin{proof}
By taking expectation with respect to training index $j_t\in\{1\cdots,N\}$, we obtain the following useful bound for our proof
\begin{equation}
\begin{aligned}
&\E\left[\|\nabla\ell_{j_t}(\thetabm^t)-\nabla\ellhat(\thetabm^t)\|_2^2|\setminus j_t\right]\\
&=\E\left[\|\nabla\ell_{j_t}(\thetabm^t) -\nabla\ell(\thetabm^t) + \nabla\ell(\thetabm^t)- \nabla\ellhat(\thetabm^t)\|_2^2|\setminus j_t\right]\\
&=\E\left[\|\nabla\ell_{j_t}(\thetabm^t) -\nabla\ell(\thetabm^t)\|_2^2 + \|\nabla\ell(\thetabm^t)- \nabla\ellhat(\thetabm^t)\|_2^2\right.\\
&\quad\quad + \left.2(\nabla\ell_{j_t}(\thetabm^t) -\nabla\ell(\thetabm^t))^{\Tsf}(\nabla\ell(\thetabm^t)- \nabla\ellhat(\thetabm^t))|\setminus j_t\right]\\
&\leq 2\left(\E\left[\|\nabla\ell_{j_t}(\thetabm^t) -\nabla\ell(\thetabm^t)\|_2^2 + \|\nabla\ell(\thetabm^t)- \nabla\ellhat(\thetabm^t)\|_2^2|\setminus j_t\right]\right)\\
&=2\E\left[\|\nabla\ell_{j_t}(\thetabm^t) -\nabla\ell(\thetabm^t)\|_2^2|\setminus j_t\right] + 2\|\nabla\ell(\thetabm^t)- \nabla\ellhat(\thetabm^t)\|_2^2\,,
\end{aligned}
\end{equation}
where we used Young's inequality that states for any $\abm_1,\abm_2\in\R^n$, we have
\begin{equation*}
2\abm_1^{\Tsf}\abm_2 \leq \|\abm_1\|_2^2 + \|\abm_2\|_2^2\quad\Rightarrow\quad\|\abm_1+\abm_2\|_2^2\leq2(\|\abm_1\|_2^2 + \|\abm_2\|_2^2).
\end{equation*}
\noindent
By taking the full expectation of the inequality (13) above and applying Lemma~\ref{Sup:Lem:Boundedlosshat} and the bounded variance in Assumption 6, we have 
\begin{equation}
\label{Eq:StoFullLFullStoLBound}
\begin{aligned}
&\E\left[\|\nabla\ell_{j_t}(\thetabm^t)-\nabla\ellhat(\thetabm^t)\|_2^2\right]\\
&\leq 2\E\left[\E\left[\|\nabla\ell_{j_t}(\thetabm^t)-\nabla\ell(\thetabm^t)\|_2^2|\setminus j_t\right]\right] +2\E\left[\|\nabla\ellhat(\thetabm^t) - \nabla\ell(\thetabm^t)\|_2^2\right]\\
&\leq 2\sigma^2 + \frac{2C}{\sqrt{w}}
\end{aligned}
\end{equation}

\noindent
Similarly, by taking full expectation and using Lemma~\ref{Sup:Lem:Boundedlosshat} and the bound~\eqref{Eq:StoFullLFullStoLBound}, we write that 
\begin{equation*}
\label{Eq:StoStoLFullStoLBound}
\begin{aligned}
&\E\left[\|\nabla\ellhat_{j_t}(\thetabm^t)-\nabla\ellhat(\thetabm^t)\|_2^2\right]\\
&=\E\left\|\nabla\ellhat_{j_t}(\thetabm^t) -\nabla\ell_{j_t}(\thetabm^t) + \nabla\ell_{j_t}(\thetabm^t)- \nabla\ellhat(\thetabm^t)\|_2^2\right]\\
&\leq2\E\left[\|\nabla\ell_{j_t}(\thetabm^t)- \nabla\ellhat(\thetabm^t)\|_2^2\right] + 2\E\left[\|\nabla\ellhat_{j_t}(\thetabm^t) -\nabla\ell_{j_t}(\thetabm^t)\|_2^2\right] \\
&\leq4\sigma^2+\frac{6C}{\sqrt{w}}.
\end{aligned}
\end{equation*}
\end{proof}
\section{Additional Technical Details and Numerical Results}
\label{Sec:TechnicalDetails}

In this section, we present technical details that were omitted from the main paper for space. We used the following \emph{signal-to-noise ratio (SNR)}~\cite{Gupta2018, Wu.etal2020} in dB for quantitively comparing different algorithms
\begin{equation}
\label{Eq:SNR}
\hbox{SNR}(\xbfhat,\xbm) = \max_{a,b\in \R} \left\{20 \log_{10}\left( \frac{\|\xbm\|_2}{\|\xbm - a\xbfhat + b\|_2}\right)  \right\},
\end{equation}
where $\xbfhat$ and $\xbm$ represents the noisy vector and ground truth respectively, while the purpose of $a$ and $b$ is to adjust for contrast and offset. We also used the \emph{structural similarity index measure (SSIM)}~\cite{Wang.etal2004} as an alternative metric. All the experiments in this work were performed on a machine equipped with an Intel Xeon Gold 6130 Processor and eight NVIDIA GeForce RTX 2080 Ti GPUs.

As stated in~\cite{Bai.etal2019, Gilton.etal2021} and other DEQ work, using acceleration can reduce computational costs during both training and inference time and lead to improvement of empirical performance at inference. Here, we focus on the final image reconstruction performance for denoising based step-descent RED (SD-RED) by using two different fixed-point acceleration methods, namely Anderson acceleration and Nesterov acceleration. The detailed instructions of using Anderson acceleration is publicly available with tutorials~\footnote{Anderson acceleration for DEQ was introduced at \url{http://implicit-layers-tutorial.org/.}}. The Nesterov acceleration for RED (DEQ) and RED (Denoising) can be summarized as
\begin{equation*}
\begin{aligned}
\label{Eq:PnP-Netupdate}
&\xbm^k = \Tsf_{\thetabm}(\sbm^k)\\
&c_k = (q_{k-1} -1 )/q_k\\
&\sbm^k = \xbm^k + c_k(\xbm^k - \xbm^{k-1}) \;,
\end{aligned}
\end{equation*}
where the value of $q_k = 1/2(1+\sqrt{1+4q_{k-1}^2})$ is adapted for better PSNR performance. The average SNR (dB) values for RED (Anderson) and RED (Nesterov) using different CNN denoisers on the MRI images are presented in Table~\ref{Tab:table2}. We empirically observe that the RED with Nesterov acceleration led to better reconstructions in terms of SNR. For the forward pass iterations, we equip ODER, RED (Denoising), RED (Unfold) and RED (DEQ)
\begin{table}[t]
\caption{Average SNR (dB) for different pre-trained CNNs on MRI test images. Note that the ``AWGN denoising'' performance is for noise level $\sigma=5$ and the ``Time (ms)'' presents the runtime of evaluating $\Rsf_{\thetabm}(\xbm)/\nabla_{\xbm}\Rsf_{\thetabm}(\xbm)$ on images of size $320\times320$.}
    \centering
    \renewcommand\arraystretch{1.2}
    {\footnotesize
    \scalebox{1}{
    \begin{tabular*}{12.5cm}{L{88pt}||C{88pt}lC{88pt}lC{88pt}|}
        \hline
         \diagbox{\bf SNR(dB)}{\bf Model} & \textbf{DnCNN} & \textbf{Tiny U-Net} & \textbf{U-Net}\\ \hline\hline
        
         \textbf{AWGN denoising}     & \multicolumn{1}{c}{30.30}  &  \multicolumn{1}{c}{30.36} & \multicolumn{1}{c}{30.41} \\
         \textbf{RED (Nesterov)}       & \multicolumn{1}{c}{26.37}  &   \multicolumn{1}{c}{26.35} & \multicolumn{1}{c}{26.42}\\
         \textbf{RED (Anderson)}      & \multicolumn{1}{c}{25.44}  & \multicolumn{1}{c}{25.46}  & \multicolumn{1}{c}{25.51} \\
         \textbf{Time (ms)}     & \multicolumn{1}{c}{12.22 / 31.74}  &  \multicolumn{1}{c}{1.65 / 11.48} & \multicolumn{1}{c}{1.88 / 32.84} \\\hline
    \end{tabular*}}
    }
\label{Tab:table2}
\end{table}
with Nesterov acceleration for all experiments used in this work. We utilize Anderson acceleration for the backward pass for both ODER and RED (DEQ). We limit the number of backward pass iterations to $50$ for efficiency considerations for all three imaging applications. The number of forward passes is presented in each imaging modality sub-section, respectively.  Followed by~\cite{Gilton.etal2021, bai.etal2022}, we additionally set the convergence criterion (relative norm difference between iterations) as
\begin{equation*}
\frac{\|\xbm^{k+1} - \xbm^{k}\|_2}{\|\xbm^{k}\|_2} < \epsilon\,,
\end{equation*}
where $\epsilon > 0$. In forward passes, We set $\epsilon=10^{-3}$ for ODER and RED (DEQ), while we set stopping criterion of backward passes to $\epsilon=10^{-2}$ for ODER and RED (DEQ).

We additionally tested three network architectures including DnCNN~\cite{Zhang.etal2017}, U-Net~\cite{Ronneberger.etal2015} and tiny U-Net~\cite{Liu.etal2021}. The DnCNN network has seventeen layers, including 15 hidden layers, an input layer, and an output layer. The tiny U-Net is a simplified variant of the normal U-Net with less trainable parameters. In specific, the CNN consists of four scales, each with a skip connection between downsampling and upsampling. These connections increase the effective receptive field of the CNN. The number of channels in each layer are \{32, 64, 128, 256\}. We make two additional modifications to the tiny U-Net. First, we drop out the second group normalization (GN)~\cite{Wu.etal2018} at each composite convolutional layers. Second, we add spectral normalization to each layers for more stable training and better Lipschitz constrain of the neural network. It is worth to note that spectral normalization is a widely used method for Lipschitz constrained neural network, and it is ~\emph{not} our aim to claim any algorithmic novelty with respect to it. In Table~\ref{Tab:table2}, we present the denoising performance on AWGN removal with noise level $\sigma=5$ and the run time of calculating $\Rsf_{\thetabm}(\xbm)/\nabla_{\xbm}\Rsf_{\thetabm}(\xbm)$ with respect to a $320\times320$ image. Overall, the traditional U-Net architecture achieves the best denoising and reconstruction performance, but requires more time per iterate than tiny U-Net. As a result, we implement traditional U-Net denoiser for RED (Denoising), and we equip ODER, RED (DEQ) and RED (Unfold) for the same tiny U-Net architecture in order to decrease per-iteration computation costs during training.
\begin{figure}[t]
\centering\includegraphics[width=1.01\textwidth]{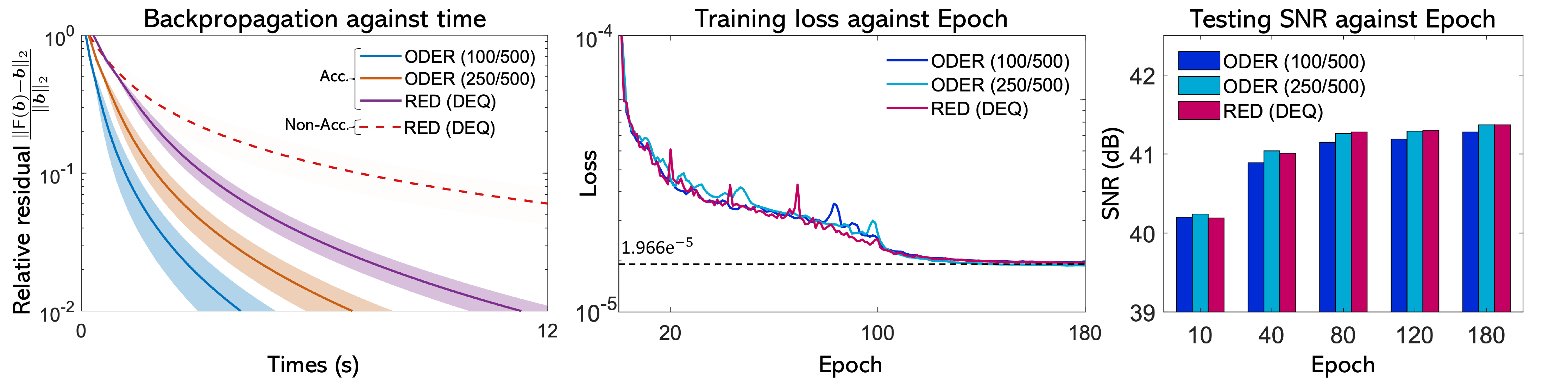}
\caption{~\emph{Numerical Illustration of ODER on IDT for two minibatch sizes $w \in \{100, 250\}$. The result for RED (DEQ) with $b = 500$ is also provided for reference. The left figure shows how ODER improves the efficiency of the backward pass of RED (DEQ) by reducing the per-iteration complexity of the measurement matrix. The middle figure plots the loss against the epoch number evaluated on the training set. The right figure plots the SNR (dB) achieved at different epochs for ODER evaluated over the testing set. This figure highlights that by using minibatches $1\leq w\leq b$, ODER improves per-iteration complexity and matches the same final imaging quality achieved by RED (DEQ).}}
\label{Fig:IDTtraining}
\vspace{-.5em}
\end{figure}

\begin{table}[t]
  \caption{Average SSIM values for IDT image recovery on testing images from~\cite{aksac2019brecahad}.}
      \centering
      \renewcommand\arraystretch{1.2}
      {\footnotesize
      \scalebox{1}{
      \begin{tabular*}{11.90cm}{L{80pt}||C{90pt}lC{90pt}lC{90pt}lC{90pt}l}
          \hline
          \multirow{2}{4em}{\textbf{Method}}& \multicolumn{3}{c}{\textbf{Input SNR (dB)}}\\
          \cline{2-4}
          &\multicolumn{1}{c}{15}  & \multicolumn{1}{c}{20} & \multicolumn{1}{c}{25} \\\hline\hline
          \textbf{TV}       & {0.9810}    & {0.9829}   & \multicolumn{1}{c}{0.9835} \\
          \textbf{U-Net}       & {0.9811}      & {0.9831}       & \multicolumn{1}{c}{0.9836}  \\        
           \textbf{ISTA-Net+}       & 0.9809   & 0.9833    & \multicolumn{1}{c}{0.9841}  \\
            \textbf{SGD-Net (100)}    & \textcolor{black}{0.9832}     & {0.9859}     & \multicolumn{1}{c}{\textcolor{black}{0.9866}} \\
           \textbf{RED (Denoising)}      & {0.9831}   & {0.9852}     & \multicolumn{1}{c}{0.9866} \\
           \cdashline{1-9}
           \textbf{ODER (100)}      & 0.9834    & 0.9875     &  \multicolumn{1}{c}{0.9889} \\
           \textbf{ODER (250)}      &  \textcolor{black}{\textbf{0.9846}}    & 0.9878 &  \textcolor{black}{\textbf{0.9891}} \\
           \textbf{RED (DEQ)}       & 0.9845  & \textcolor{black}{\textbf{0.9879} }    &  \multicolumn{1}{c}{0.9890} \\\hline
      \end{tabular*}}
      }
  \label{Tab:table1}
  \end{table}

In Fig.~\ref{Fig:etaEstimate}, we report the empirical evaluation of the Lipschitz constant $\eta$ of $\Tsf$ used in our simulations and stated in Lemma~\ref{Sup:Lem:ContractionofU} on the testing images from all three inverse problems in the main manuscript. We plot the histograms of values $\eta=\|\Tsf(\xbm^{k-1})-\Tsf(\xbmbar)\|_2/\|\xbm^{k-1}-\xbmbar\|_2$, and the maximum value of each histogram is indicated by a vertical bar with the frequency of $\eta>1$, providing an empirical upper bound on the values of $\eta$. Note that despite the numerical limitations of current spectral normalization techniques, they still provide a useful tool to ensure stable convergence.

\subsection{Additional Details and Validations for IDT}
\label{Sup:Sec:SuppIDT}

We follow the experimental setup in~\cite{Wu.etal2020, Sun.etal2021, Ling.etal18} to generate the measurement matrix and simulated images for IDT~\footnote{The code is publicly available at \url{https://github.com/bu-cisl/High-Throughput-IDT}}.  In specific, the simulated images are assumed to be on the focal plane $z\,=\,0\SI{}{\micro\metre}$ with LEDs located at $z_{\text{LED}} = -70 \SI{}{\milli\metre}$. The wavelength of the illumination was set to $\lambda = 630\SI{}{\nano\metre}$ and the background medium index was assumed to be water with $\epsilon_b = 1.33$.  We generated $b = 500\,$ intensity measurements with $40\times$ microscope objectives (MO) and 0.65 numerical aperture (NA). Followed by~\cite{Sun.etal2021}, we assume real permittivity function, and our implementation stores each $\Abm_i$ as two separate arrays for phase and absorption. In addition, each matrix is stored in the Fourier space to reduce the computational complexity of evaluating convolutions. This result in the storage of complex valued arrays for each, consisting of pairs of single precision floats for every element when training ODER/RED (DEQ). Thus, the shape of each measurements and measurement operators in full batch RED (DEQ) for reconstructing one slice is $1 \times 416\times 416\times 500 \times2$. A detailed discussion on the IDT forward model is available in~\cite{Ling.etal18, Wu.etal2020}. 

We train ODER and RED (DEQ) with the initialization $\xbm_0=\Abm^{\Hsf}\ybm$, where $\Abm^{\Hsf}$ denotes the conjugate transpose. For both ODER and RED (DEQ) during training, we fix the step-size parameter and regularization parameter to $\gamma=5\times10^{-3}$ and $\tau=4$, respectively. The learning rate of ODER/RED (DEQ) is set in two stages. In the first 100 epochs, we adopt the cyclic learning rate policy~\cite{Smith2017}, where the policy cycles the learning rate between $0.05$ and $0.16$ with exponentially decay to $0.9998$. In stage 2, the learning rate was gradually reduced by a factor of $0.6$ every $50$ epochs. The number of total training epochs was $200$.
We set the same forward pass initialization $\xbm^0$ in ODER for all reference methods. In these experiments, we set the number of forward pass iterations in ODER/RED (DEQ) to $K=80$, and we set the steps in RED (Denoising) and RED (Unfold) to $K=100$ and $K=9$, respectively. 

Table~\ref{Tab:table1} reports average SSIM values obtained by ODER and other baselines. Fig.~\ref{Fig:IDTtraining} 
presents quantitative evaluation of ODER on IDT for two minibatch sizes $w\in\{100,250\}$ against RED (DEQ) using the full-batch of $b=500$ measurements. Specifically, Fig.~\ref{Fig:IDTtraining} (\emph{left}) presents the empirical acceleration of ODER backward pass over that of RED (DEQ) due to the reduction in the computation complexity of data-consistency blocks at each iteration. Fig.~\ref{Fig:IDTtraining} (\emph{middle}) illustrates the loss against the epoch number on the training set, while Fig.~\ref{Fig:IDTtraining} (\emph{right}) presents the SNR achieved at different epoches for different values of $w$ evaluated over testing set. Fig.~\ref{Fig:IDTvisual} provides additional visualizations of IDT reconstruction produced by ODER and other reference methods.
\begin{figure}[t]
\centering\includegraphics[width=0.9\textwidth]{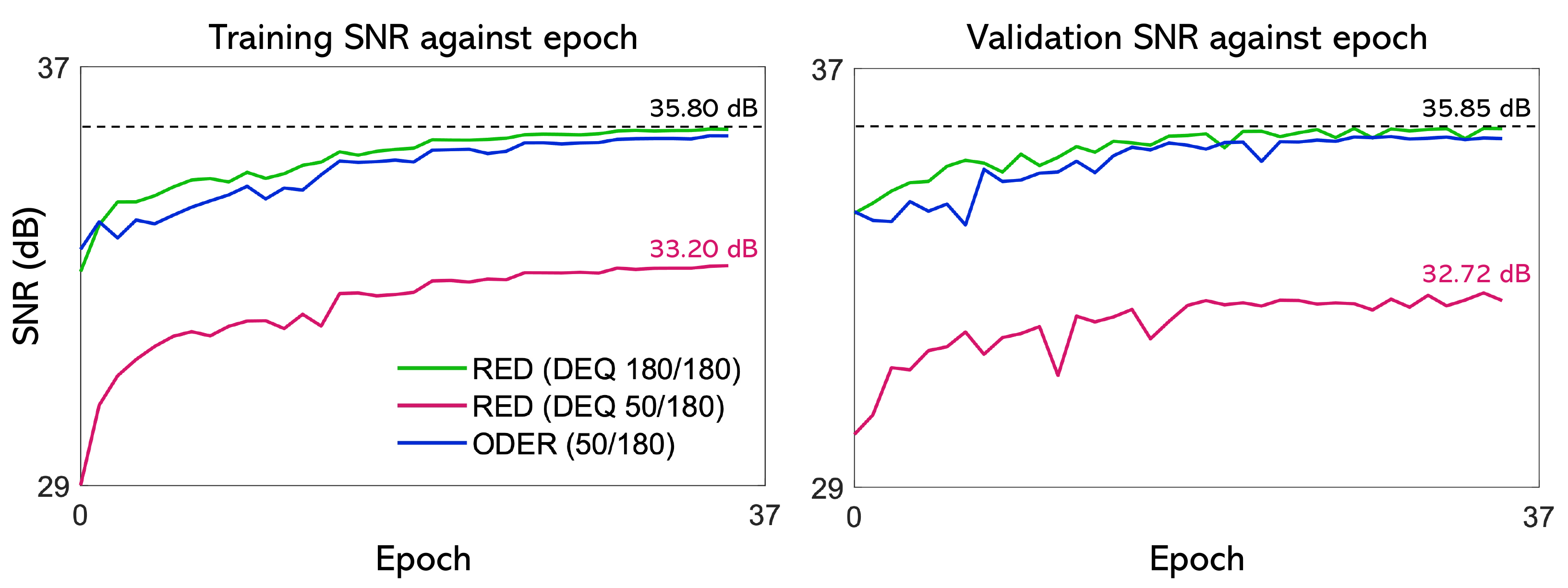}
\caption{~\emph{Average SNR of ODER and RED (DEQ) on the taining and validation images of sparse-view CT are plotted against the training epoch for a fixed per-iteration measurement budget. The total number of projection views is $b=180$. Under the same per-iteration memory complexity, the training of ODER achieves a significantly higher SNR than RED (DEQ) when the budget is limited to $50$ projection views at per-iterate due to its ability to randomly cycle through the full dataset.}}
\label{Fig:memorybudget}
\vspace{-.5em}
\end{figure}

\subsection{Additional Details and Validations for CT/MRI}
\textbf{Sparse-view CT.} For the CT images, we train ODER by using the filtered backprojection (FBP) initialization $\xbm^0 = \Abm^{\Hsf}F\ybm$. We use the Hann filter for FBP reconstruction. We set the number of forward pass steps in ODER/RED (DEQ) to $K=180$, and we use Adam with training minibatch size $4$ and weight decay $1\times 10^{-7}$. For ODER and RED (DEQ), we fixed the step-size to $\gamma=1.25\times 10^{-3}$ and regularization parameter to $\tau=3$. The learning rate starts from $3\times 10^{-4}$ and is halved at epoch 15, then gradually reduced by a factor of 0.6 every 5 epochs. The number of total training epochs is 35. It is worth to note that we equally divided the full projection views $b\in\{90, 120, 180\}$ into $5$ non-overlapping chunks, each with size of $\{18, 24, 36\}$ views, respectively. At every iterations, the corresponding ODER model with $w/b\in\{30/90,40/120,50/180\}$ randomly picks a subset of $\{6, 8, 10\}$ from each chunk for the data-consistency block calculation. This leads to better empirical reconstruction improvement for ODER. Similarly, we set the number of forward pass iterations in RED (Denoising) and RED (Unfold) to $K=150$ and $K=7$, respectively. 

Fig.~\ref{Fig:memorybudget} compares the average reconstruction SNR of ODER and RED (DEQ) for a fixed periteration measurement budget on sperase-view CT. The total number of projection views is $b=180$. The batch algorithm RED (DEQ 50/180) and ODER with ($w=50$) are allowed to use only 50 projection views at per iterate. This means that in each figure both RED (DEQ 50/180) and ODER with ($w=50$) have the same per-iteration computational complexity. Empirically, we observe that ODER outperforms RED (DEQ 50/180) by around 3 dB on the validation set under the same per-step memory complexity and matches the performance of RED (DEQ 180/180). In Fig.~\ref{Fig:Visualctmri} (\emph{top}), we provide additional visualizations of the solutions produced by ODER/RED (DEQ) and various baseline methods considered in our work.

\vspace{+0.3em}
\noindent
\textbf{Parallel MRI.} For the MRI images, we train ODER/RED (DEQ) by using the zero-filled reconstruction as initialization $\xbm^0$. For the 2D MRI images, we set the number of iterations in ODER/RED (DEQ) to $K=200$, and we use Adam with training minibatch size $16$ and weight decay $2\times 10^{-7}$.  We fixed the step-size to $\gamma=1.2$ and regularization parameter to $\tau=0.05$ for both ODER and RED (DEQ). The learning rate starts from $1\times 10^{-4}$ and is gradually reduced by a factor of 0.6 every 10 epochs. The number of total training epochs is 40. 
In these experiments, we set the number of forward pass iterations in RED (Denoising) and RED (Unfold) to $K=200$ and $K=25$, respectively. For the 3D MRI volumes, we set the number of steps in ODER/RED (DEQ) to $K=400$, and we use Adam with training minibatch size $4$ and weight decay $1\times 10^{-6}$.  We fixed the step-size to $\gamma=1.25$ and regularization parameter to $\tau=0.01$ for both ODER and RED (DEQ). The learning rate starts from $5\times 10^{-5}$ and is gradually reduced by a factor of 0.65 every 10 epochs. The number of total training epochs is 100. Similar to sparse-view CT, we equally divided the number of coil sensitive maps $b=96$ into $4$ non-overlapping chunks, each with size of $24$ coils, respectively. At every iterations, the corresponding ODER model with $w=48$ randomly picks a subset of $12$ from each chunk for the data-consistency block calculation. For the 3D MRI volumes, we set the number of forward pass iterations in RED (Denoising) and RED (Unfold) to $K=400$ and $K=20$, respectively. Fig.~\ref{Fig:Visualctmri} (\emph{bottom}) reports the comparison on medical brain images for CS-MRI with under-sampling ratio of $10\%$. 

\begin{figure}[t]
\centering\includegraphics[width=0.95\textwidth]{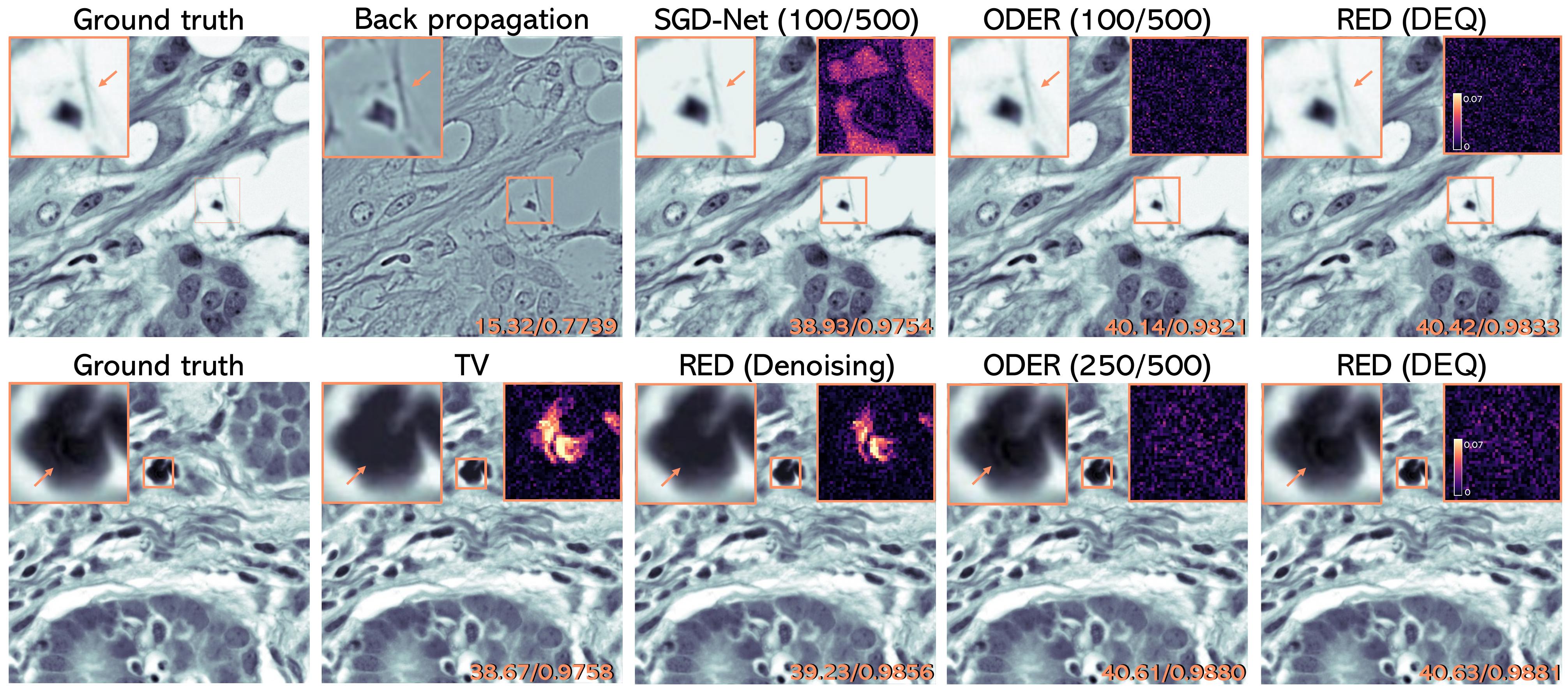}
\caption{~\emph{Quantitative evaluation of several well-known methods on IDT under noise corresponding to input SNR of 20 dB. The total number of IDT measurements in this experiment is $b = 500$. RED (DEQ) corresponds to the full batch architecture that uses all the measurements at every step. Each image is labeled with its SNR (dB) and SSIM values with respect to the original image. The yellow box provides a close-up with a corresponding error map provided on its right.  Note the similar performance of ODER and RED (DEQ), and the improvement over RED (Denoising) /RED (Unfold) due to the usage of DEQ learning. Best viewed on a digital display.}}
\label{Fig:IDTvisual}
\vspace{-.5em}
\end{figure}
\begin{figure}[t]
\centering\includegraphics[width=0.81\textwidth]{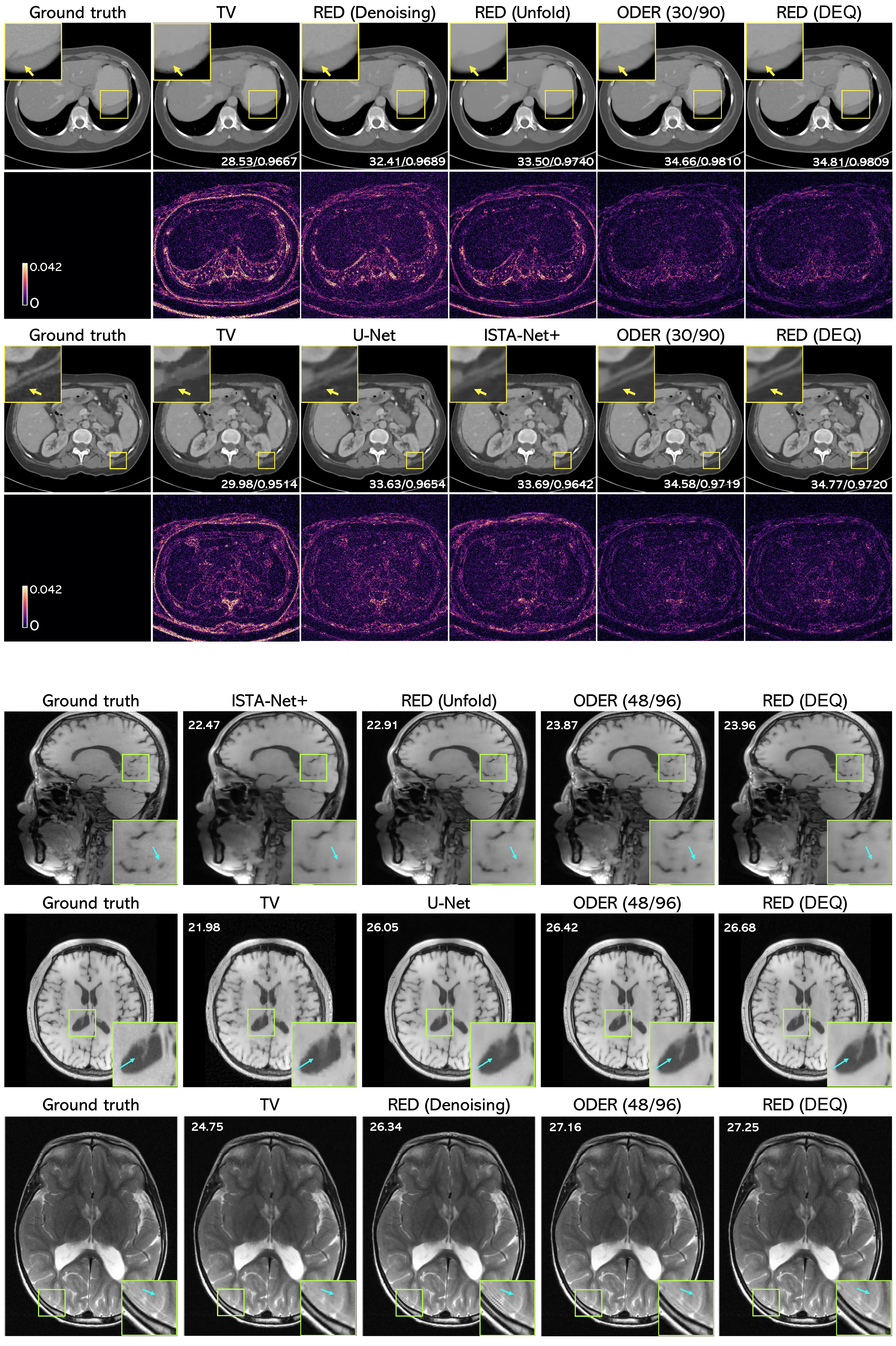}
\caption{~\emph{Visual evaluation of several well-known methods on two imaging problems: (top) Reconstruction of sparse-view CT from $b=90$ projection views. Each image is labeled with the corresponding SNR (dB) and SSIM values. The figures below are the error residual images to the ground truth; (bottom) Reconstruction of brain MRI images from its radial Fourier measurements at $10\%$ sampling with $b=96$ simulated coil sensitivity maps. Best viewed on a digital display.}}
\label{Fig:Visualctmri}
\vspace{-.5em}
\end{figure}

\end{document}